\renewcommand{\arraystretch}{0.8}
\definecolor{dkblue}{rgb}{0,0.1,0.5}
\definecolor{dkgreen}{rgb}{0,0.4,0}
\definecolor{dkred}{rgb}{0.6,0,0}
\definecolor{dkpurple}{rgb}{0.7,0,1.0}
\definecolor{purple}{rgb}{0.9,0,1.0}
\definecolor{olive}{rgb}{0.4, 0.4, 0.0}
\definecolor{teal}{rgb}{0.0,0.4,0.4}
\definecolor{azure}{rgb}{0.0, 0.5, 1.0}
\definecolor{gray}{rgb}{0.5, 0.5, 0.5}
\definecolor{dkgrey}{rgb}{0.2, 0.2, 0.2}
\definecolor{lilac}{rgb}{0.70, 0.04, 0.08}
\newcommand{\comm}[3]{{\color{#1}[#2: #3]}}
\newcommand{\C}{\mathcal{C}}
\newcommand{\F}{\mathcal{F}}
\newcommand{\G}{\mathcal{G}}
\newcommand{\op}{^\mathrm{op}}
\newcommand{\rev}{^\mathrm{rev}}
\renewcommand{\d}[1]{#1^\circ}
\newcommand{\sd}[1]{#1^\bullet}
\newcommand{\Set}{\mathbf{Set}}
\newcommand{\CAT}{\mathbf{CAT}}
\newcommand{\Kl}{\mathbf{Kl}}
\newcommand{\id}{\mathsf{id}}
\newcommand{\comp}{\circ}
\newcommand{\sym}{\mathsf{sym}}
\newcommand{\ass}{\mathsf{ass}}
\newcommand{\Id}{\mathsf{Id}}
\newcommand{\ee}{\mathsf{e}}
\newcommand{\mm}{\mathsf{m}}
\newcommand{\phia}{\phi}
\newcommand{\phib}{\phi}
\newcommand{\psia}{\psi}
\newcommand{\psib}{\psi}
\newcommand{\eps}{\varepsilon}
\newcommand{\de}{\delta}
\newcommand{\Coalg}{\mathbf{Coalg}}
\newcommand{\Run}{\mathbf{Run}}
\newcommand{\Mnd}{\mathbf{Mnd}}
\newcommand{\Comnd}{\mathbf{Comnd}}
\newcommand{\Mon}{\mathbf{Mon}}
\newcommand{\Comon}{\mathbf{Comon}}
\newcommand{\Int}{\mathbf{IL}}
\newcommand{\mcInt}{\mathbf{MCIL}}
\newcommand{\fun}{\Rightarrow}
\newcommand{\out}{\mathsf{out}}
\newcommand{\ev}{\mathsf{ev}}
\newcommand{\fst}{\mathsf{fst}}
\newcommand{\snd}{\mathsf{snd}}
\newcommand{\zt}{{\ast}}
\newcommand{\pair}[2]{\langle #1, #2 \rangle}
\newcommand{\inl}{\mathsf{inl}}
\newcommand{\inr}{\mathsf{inr}}
\newcommand{\ldist}{\mathsf{ldist}}
\newcommand{\rdist}{\mathsf{rdist}}
\newcommand{\lollistar}{\mathrel{-\mkern-6mu\star}}
\newcommand{\dn}{\downarrow}
\newcommand{\Nat}{\mathbb{N}}
\newcommand{\Bool}{\mathbb{B}}
\newcommand{\btt}{\mathsf{tt}}
\newcommand{\bff}{\mathsf{ff}}
\newcommand{\bnot}{\mathsf{not}}
\newcommand{\Maybe}{\mathsf{Maybe}}
\newcommand{\just}{\mathsf{just}}
\newcommand{\nothing}{\mathsf{nothing}}
\newcommand{\dblt}{\mathsf{dblt}}
\newcommand{\St}{\mathsf{St}}
\newcommand{\Cost}{\mathsf{Cost}}
\newcommand{\Cont}{\mathsf{Cont}}
\newcommand{\letin}[2]{\mathsf{let~} #1 \mathsf{~in~} #2}
\newcommand{\pl}{\oplus}
\newcommand{\er}[1]{\comm{dkgreen}{ER}{#1}}
\newcommand{\tu}[1]{\comm{blue}{TU}{#1}}
\renewcommand{\er}[1]{}
\renewcommand{\tu}[1]{}
\renewcommand{\orcidID}[1]{\href{https://orcid.org/#1}{\includegraphics[height=9pt]{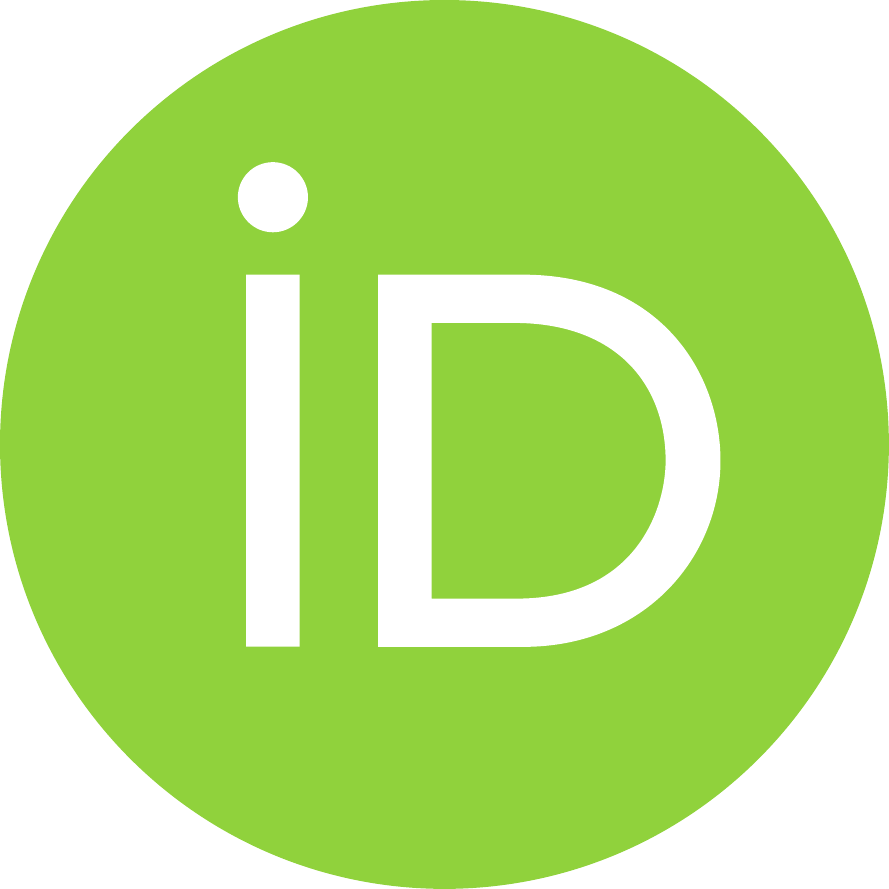}}}
\begin{document}



%
\title{Interaction Laws of Monads and Comonads}
%
%
\author{Shin-ya Katsumata\inst{1}\orcidID{0000-0001-7529-5489} \and
Exequiel Rivas\inst{2}\orcidID{0000-0002-2114-624X} \and
Tarmo Uustalu\inst{3,4}\orcidID{0000-0002-1297-0579}
}
\authorrunning{S.-y. Katsumata, E. Rivas and T. Uustalu}
%
\institute{National Institute of Informatics, Tokyo, Japan  
\and
Inria Paris, France 
\and
Dept.\ of Computer Science, Reykjavik University, Iceland 
\and
Dept.\ of Software Science, Tallinn University of Technology, Estonia \\
\email{s-katsumata@nii.ac.jp}, \email{exequiel.rivas-gadda@inria.fr}, 
\email{tarmo@ru.is}
}
\maketitle              
\begin{abstract}
  We introduce and study functor-functor and monad-comonad interaction
  laws as mathematical objects to describe interaction of
  effectful computations with behaviors of effect-performing machines. Monad-comonad
  interaction laws are monoid objects of the monoidal category of
  functor-functor interaction laws. We show that, for suitable
  generalizations of the concepts of dual and Sweedler dual, the
  greatest functor resp.\ monad interacting with a given functor or
  comonad is its dual while the greatest comonad interacting with a
  given monad is its Sweedler dual. We relate monad-comonad
  interaction laws to stateful runners. We show that functor-functor
  interaction laws are Chu spaces over the category of endofunctors
  taken with the Day convolution monoidal structure. Hasegawa's
  glueing endows the category of these Chu spaces with a monoidal
  structure whose monoid objects are monad-comonad
  interaction laws.
%
\end{abstract}


%
%
%
\section{Introduction}
\label{sec:introduction}



What does it mean to run an effectful program, abstracted into a
computation? 

In this paper, we take the view that an effectful computation does
not perform its effects; those are to be provided externally.
The computation can only proceed if placed in an environment that can provide its effects, e.g, respond to the computation's requests for input, listen to its
output, resolve its nondeterministic choices by tossing a coin,
consistently respond to its fetch and store commands. Abstractly, such an environment
is a machine whose implementation is opaque to
us; we can witness its behavior, its evolution through externally visible states. 

To formalize this intuition, we follow Moggi~\cite{Mog89} and Plotkin
and Power~\cite{PP03} in regards to allowed computations (the chosen
notions of computation) and describe them using a monad (resp.\
algebraic theory) $T$ on the category of types and functions that we
want to compute on. Allowed machine behaviors (the chosen notion of
machine behavior), at the same time, are described with a comonad
$D$. An operational semantics is then described by what we call an
interaction law, a natural transformation
$\psi : T X \times D Y \to X \times Y$ compatible with the (co)unit
and (co)multiplication.
This polymorphic function sends a computation $(TX)$ and a
machine behavior from some initial state ($DY$) into a return value
$X$ and a final state $Y$. It is also fine to
work with notions of computation and machine behavior that do not 
include ``just returning'' or/and are not closed under sequential composition; those can
be described with plain functors instead of a monad and a comonad. 

We take special interest in the questions (a) which is the ``greatest''
comonad interacting with the given monad $T$ (so any interaction
law of $T$ with any comonad would factor through the canonical
interaction law of $T$ with this comonad)? and (b) which is the ``greatest''
monad (resp.\ functor) interacting with a given comonad $D$ (or
functor $G$)? To answer these, we draw inspiration from algebra,
where the dual of a vector space $V$ is
$\d{V} = V \to \mathbb{K}$. The answer to (b) turns out to be: the dual of $D$ (resp.\ 
$G$), under a suitably generalized concept of dual. Question (a)
is harder. To answer it, we need to generalize the concept of 
what is called the Sweedler dual. The greatest comonad interacting with $T$ 
is the Sweedler dual of $T$. 

The contributions in this paper are the following:
\begin{enumerate}[label=(\roman*)]
\item We introduce \emph{functor-functor interaction laws},
  define
  the dual of a functor, and show that the greatest functor
  interacting with a given functor is its dual
  (Section~\ref{sec:functor-functor}).
\tu{I removed label (o) from here: I wanted it in other places,
as in the text we spoke of the first, second, third alternative
and I wanted this to agree with the number in the list, 
so the main option had to be (o)}
\item We study \emph{monad-comonad interaction laws} as monoid objects
  of the category of functor-functor interaction laws. We show that the dual lifts from functors to comonads 
and that the greatest
  monad interacting with a given comonad is its dual
  whereas for monads it does not lift like this; for the greatest comonad interacting with a monad, the   Sweedler dual is needed 
  (Section~\ref{sec:monad-comonad}).
\item We relate monad-comonad interaction laws to \emph{stateful runners}
  of Uustalu~\cite{Uus15} (Section~\ref{sec:running}).
\item Using the Day convolution and duoidal categories, we recast
  monad-comonad interaction laws as monoid-comonoid interaction laws,
  and relate them to two standard constructions: Chu spaces and \linebreak Hasegawa's glueing
  (Section~\ref{sec:monoid-comonoid}). This gives us a method 
  for computing the Sweedler duals of free monoids (monads) and their 
  quotients by equations. 
\end{enumerate}

We also introduce and study \emph{residual} functor-functor
interaction laws, monad-comonad interaction laws and stateful runners
as generalizations where the machine need not be able to perform all
effects of the computation (Section~\ref{sec:residual}).

\medskip

We assume the reader to be familiar with adjunctions/monads/comonads,
extensive categories \cite{CLW93}, Cartesian closed categories,
ends/coends (the end-coend calculus \cite{CW01,LO15}). In a nutshell,
extensive categories are categories with well-behaved finite
coproducts.

Throughout most of the paper
(Sections~\ref{sec:functor-functor}--\ref{sec:running}), we work with
one fixed base category $\C$ that we assume to be extensive with
finite products. For some constructions (the dual of a functor), we
also need that $\C$ is Cartesian closed. For the same constructions,
we also use certain ends
that we either explicitly show to exist or only use when 
they happen to exist. 
We also rely on Cartesian closedness in most examples.

\er{We don't have space, but we should describe the overview of the
  paper, at least right now the functor-functor interaction law
  appears as a surprise}

\section{Functor-functor interaction}
\label{sec:functor-functor}

We begin with functor-functor interaction, to then proceed to the
monad-comonad interaction laws in the next section.

\subsection{Functor-functor interaction laws}



In a functor-functor interaction law, computations over a set of
values $X$ are elements of $F X$ where $F$ is a given functor. Machine
behaviors over a set of states $Y$ are elements of $G Y$ where $G$ is
another given functor. Any allowed computation and any allowed machine
behavior can help each other reach a return value and a final state by
interacting as prescribed.

\medskip

We define an \emph{functor-functor interaction law} on $\C$ to be given by two
endofunctors $F$, $G$ together with a family of maps
\[
\phi_{X,Y} : F X \times G Y \to X \times Y
\]
natural in $X$ and $Y$.


\begin{example}
  The archetypical example of a functor-functor interaction law is
  defined by $F X = A \fun X$, $G Y = A \times Y$, and
  $\phi\, (f, (a, y)) = (f\, a, y)$ for some fixed object $A$. 
But we can also take, e.g., $F X = A \fun X$, $G Y = C \times Y$, and
$\phi\, (f, (c, y)) = (f\, (h\, c), y)$ for some fixed map
$h : C \to A$.
\end{example}

\begin{example} \label{ex:update-o}
  A more interesting example is obtained by taking
$F X = A \fun (B \times X)$, $G Y = A \times (B \fun Y)$,
$\phi\, (f, (a, g)) = \letin{(b, x) \leftarrow f\, a}{(x, g\, b)}$.
%
We can vary this by taking $G Y = (A \fun B) \fun (A \times Y)$ and
$\phi\, (f, h) = \letin{\pair{f_0}{f_1} \leftarrow f; (a, y) \leftarrow h\, f_0}{(f_1 a, y)}$.
\end{example}

%

\begin{example}
If $\C$ has the relevant
initial algebras and final coalgebras, we can get interaction laws by
iterating the above interactions, e.g., with
$F X = \mu Z.\, X + (A \fun (B \times Z))$ and
$G Y = \nu W.\, Y \times (A \times (B \fun W))$, or with
$F X = \nu Z.\, X + (A \fun (B \times Z))$ and
$G Y = \mu W.\, Y \times (A \times (B \fun W))$. We will shortly explain the
construction of $\phi$ in the first of these two cases.
\end{example}





An \emph{interaction law map} between $(F, G, \phi)$,
$(F', G', \phi')$ is given by natural transformations $f : F \to F'$,
$g : G' \to G$ such that 
$\phi_{X,Y} \comp (\id_{FX} \times g_Y) = \phi'_{X,Y} \comp (f_X \times \id_{G'Y})$.

Interaction laws form a category $\Int(\C)$, where the identity on
$(F, G, \phi)$ is $(\id_F, \id_G)$, and the composition of $(f, g) :
(F, G, \phi) \to (F', G', \phi')$ and $(f', g') : (F', G', \phi') \to
(F'', G'', \phi'')$ is $(f' \comp f, g \comp g')$. The condition on a
interaction law map is met for the composition because of the
commutation of the diagram
\[
\small
\xymatrix@R=1pc{
& & F X \times G Y \ar[r]^-{\phi_{X,Y}}
    & X \times Y \ar@{=}[dd] \\
& F X \times G' Y \ar[ur]^-{\id \times g_Y}  \ar[dr]_-{f_X \times \id} 
  & & \\
F X \times G'' Y \ar[ur]^-{\id \times g'_Y} \ar[dr]_-{f_X \times \id}
& & F' X \times G' Y \ar[r]^-{\phi'_{X,Y}}
    & X \times Y \ar@{=}[dd] \\
& F' X \times G'' Y \ar[ur]^-{\id \times g'_Y}  \ar[dr]_-{f'_X \times \id} 
  & & \\
& & F'' X \times G'' Y \ar[r]^-{\phi''_{X,Y}}
    & X \times Y 
}
\]

The composition monoidal structure of $[\C, \C]$ induces 
a similar monoidal structure on the category $\Int(\C)$.
The tensorial unit is $(\Id, \Id, \id_{\Id \times \Id})$. The tensor
of $(F, G, \phi)$ and $(J, K, \psi)$ is $(F \cdot J, G \cdot K, \psi
\comp \phi \cdot (J \times K))$. The tensor of $(f, g) : (F, G, \phi)
\to (F', G', \phi')$ and $(j, k) : (J, K, \psi) \to (J', K', \psi')$
is $(f \cdot j, g \cdot k)$. The condition on an interaction law map
is met by the commutation of
\[
\small
\xymatrix@R=0.7pc@C=0.4pc{
& & F (J X) \times G (K Y) \ar[rr]^{\phi_{J X, K Y}} 
    & & J X \times K Y \ar[rr]^-{\psi_{X,Y}}
        & & X \times Y \ar@{=}[dddd] \\
& F (J X) \times G (K' Y)  \ar[rr]^{\phi_{JX,K'Y}} \ar[ur]^-{\id \times Gk_Y}
  & & J X \times K' Y \ar[ur]_-{\id \times k_Y} \ar@{=}[dd]
\\
\hspace*{-1cm} F (J X) \times G' (K' Y) \hspace*{-1cm} \ar[ur]^{\id \times g_{KY}}  \ar[dr]_-{f_{JX} \times \id} 
& & 
     & & \\
&  F' (J X) \times G' (K' Y) \ar[rr]^{\phi'_{JX,K'Y}} \ar[dr]_-{Fj_X \times \id}
  & & J X \times K' Y \ar[dr]^-{f_X \times \id} 
\\
& & F' (J' X) \times G' (K' Y) \ar[rr]^{\phi'_{J' X, K' Y} }
    & & J' X \times K' Y \ar[rr]^-{\psi'_{X,Y}}
        & & X \times Y 
}
\]

\subsection{Two degeneracy results}

Here are two simple degeneracy results. We first recall the notion of
operation for monads and functors.

\paragraph{A comment on operations}

The concept of (algebraic) operation of a monad can be defined
in several ways.
Given a monad $T$, an \emph{$n$-ary operation} of $T$ can be defined
to be a natural transformation $c' : (TX)^n \to T X$ (where $X^n$ is
$n$-fold product of $X$ with itself) satisfying
\[
\small
\xymatrix{
(TTX)^n \ar[d]_{(\mu_X)^n} \ar[r]^-{c'_{TX}} 
& TT X \ar[d]^{\mu_X}\\
(TX)^n \ar[r]^-{c'_X}
& T X 
} 
\]
This is the format used by Plotkin and Power \cite{PP03oper}. (We do
not require here that $T$ is strong and drop compatibility with the
strength.) Alternatively, we can say that it is a natural
transformation $c : X^n \to T X$ and drop the requirement
of commutation with $\mu$, as done by Jaskelioff and Moggi \cite{JM10}.

We can also say that it is a map $1 \to T n$ (a ``generic effect'' in
the sense of Plotkin and Power \cite{PP03}) but, for this to amount to the
same as the previous alternative, one needs that $T$ is strong.

If a finitary set monad $T$ is determined by a Lawvere theory
$(\mathcal{L}, L)$ where $\mathcal{L}$ is a category with finite
products and $L : \mathbb{F}\op \to \mathcal{L}$ is identity on
objects and strictly product-preserving, one can say that an operation
is a map $n \to 1$ in $\mathcal{L}$. Given a monad $T$ on an arbitrary
category $\C$, its large Lawvere theory is $((\Kl(T))\op, J\op)$ where
$J : \C \to \Kl(T)$ is the left adjoint of the Kleisli adjunction. A
map $n \to 1$ in $(\Kl(T))\op$ is the same as a map $1 \to T n$ in
$\C$.

In this paper, we prefer to work with operations as maps
$c : X^n \to T X$ because this format is intuitive and economic 
in proofs by diagram chasing but also because it makes 
sense when $T$ is only a functor and not a monad. 

\paragraph{Functors with a nullary operation}

For the functor $\Maybe\, X = (\just : X) + (\nothing : 1)$, it should
be clear intuitively that it cannot have a nondegenerate interacting
functor: from the element $\nothing_0$ of $\Maybe\, 0$, one cannot possibly
extract an element of $0$. Formally, we have the following theorem.
\begin{theorem}
If a functor $F$ has a nullary operation, i.e., comes with a family of
maps $c_X : 1 \to F X$ natural in $X$, then any interacting functor
$G$ is constant zero, i.e., $G Y \cong 0$ for any $Y$.
\end{theorem}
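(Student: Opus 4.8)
The plan is to instantiate the interaction law at the initial object $0$ and to exploit that, in an extensive category, $0$ is \emph{strict}: every morphism with codomain $0$ is an isomorphism. Intuitively, the nullary operation provides a computation even over the empty set of values $X = 0$, and making it interact with an arbitrary machine behavior would have to yield a value in $0$, which is impossible unless there is no machine behavior at all.

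Concretely, I would first specialize the nullary operation to $X = 0$, obtaining a point $c_0 : 1 \to F 0$. For an arbitrary $Y$, I would then form the composite
\[
\xymatrix{
G Y \ar[r]^-{\cong} & 1 \times G Y \ar[r]^-{c_0 \times \id_{G Y}} & F 0 \times G Y \ar[r]^-{\phi_{0, Y}} & 0 \times Y .
}
\]
Postcomposing with the projection $0 \times Y \to 0$ gives a morphism $G Y \to 0$. Observe that naturality of $c$ and of $\phi$ is not actually needed; only their components at $X = 0$ enter the argument.

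Finally, strictness of $0$ finishes the proof. It already gives $0 \times Y \cong 0$, because the projection $0 \times Y \to 0$ is a morphism into $0$ and therefore an isomorphism; and, applied to the composite $G Y \to 0$ constructed above, it yields $G Y \cong 0$ for every $Y$, which is the claim.

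The argument is short and I anticipate no real obstacle. The only points requiring care are the choice to instantiate at $X = 0$ and the recognition that the relevant property of extensive categories is strictness of the initial object — it is this that both collapses $0 \times Y$ to $0$ and turns the mere existence of a map $G Y \to 0$ into an isomorphism.
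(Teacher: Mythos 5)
Your proposal is correct and follows essentially the same route as the paper's proof: form the composite $G Y \to 1 \times G Y \xrightarrow{c_0 \times \id} F 0 \times G Y \xrightarrow{\phi_{0,Y}} 0 \times Y \xrightarrow{\fst} 0$ and conclude $G Y \cong 0$ by strictness of the initial object in an extensive category. Your added remark that naturality of $c$ and $\phi$ plays no role is a nice observation, but the argument itself is the paper's.
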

\begin{proof}
Indeed, for any $Y$, we have the map
\[
\small
\xymatrix{
G Y \ar[r]^-{\pair{!}{\id}} 
& 1 \times G Y \ar[r]^-{c_0 \times \id} 
  & F 0 \times G Y \ar[r]^-{\phi_{0,Y}} 
    & 0 \times Y \ar[r]^-{\fst} 
      & 0
}
\]
Since the initial object of an extensive category is strict (any map
to $0$ is an isomorphism), we can conclude that $G Y \cong 0$.
\qed\end{proof}

The theorem applies to $\Maybe$ since it comes with a nullary
operation $\nothing_X : 1 \to \Maybe\, X$.

\paragraph{Functors with a commutative binary operation}

A similar no-go theorem holds for commutative binary
operations.
%
\begin{theorem}
If a functor $F$ has a commutative binary operation, i.e., comes with a family of maps $c_X : X \times X \to F X$ natural in $X$ such that
$c_X = c_X \comp \sym_{X,X}$, then any interacting functor $G$ is
constant zero, i.e., $G Y \cong 0$ for any $Y$.
\end{theorem}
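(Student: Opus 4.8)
The plan is to mimic the proof of the nullary case, but to manufacture the needed element of $F$ at the object $1+1$ rather than at $0$, using the binary operation applied to the two injections and exploiting commutativity through the swap of $1+1$. Write $\inl, \inr : 1 \to 1+1$ for the injections and let $s : 1+1 \to 1+1$ be the swap isomorphism $[\inr, \inl]$ exchanging them. First I would single out the global element $w = c_{1+1} \comp \pair{\inl}{\inr} : 1 \to F(1+1)$, the two-point analogue of the element $c_0$ used in the nullary theorem.

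The first genuine step is to show that $w$ is a fixed point of $F s$, i.e.\ $F s \comp w = w$. Naturality of $c$ along $s$ gives $F s \comp c_{1+1} = c_{1+1} \comp (s \times s)$, and $(s \times s) \comp \pair{\inl}{\inr} = \pair{\inr}{\inl} = \sym_{1+1,1+1} \comp \pair{\inl}{\inr}$; commutativity $c_{1+1} = c_{1+1} \comp \sym_{1+1,1+1}$ then cancels this swap and returns $w$.

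Next, for an arbitrary $Y$, I would build the map $p : G Y \to 1+1$ as the composite
\[
\xymatrix{
G Y \ar[r]^-{\pair{!}{\id}} & 1 \times G Y \ar[r]^-{w \times \id} & F(1+1) \times G Y \ar[r]^-{\phi_{1+1,Y}} & (1+1) \times Y \ar[r]^-{\fst} & 1+1
}
\]
in exact analogy with the nullary proof. Combining naturality of $\phi$ in its first argument with respect to $s$, namely $\phi_{1+1,Y} \comp (F s \times \id) = (s \times \id) \comp \phi_{1+1,Y}$, with the fixed-point identity $F s \comp w = w$, I would conclude $s \comp p = p$: the point of $1+1$ named by $p$ is forced to be invariant under the swap.

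The last step, where the real content sits, is to deduce $G Y \cong 0$ from $s \comp p = p$. In $\Set$ this is immediate, as the swap of a two-element set has no fixed point, but in general I would argue by extensivity. Pulling $p$ back along $\inl$ and $\inr$ exhibits $G Y$ as a coproduct $P_1 + P_2$ with the restriction of $p$ to $P_i$ factoring through the $i$-th injection. The equation $s \comp p = p$ then forces $p$ restricted to $P_1$ to factor through $\inr$ as well as $\inl$; since coproducts in an extensive category are disjoint, the pullback of $\inl$ and $\inr$ is $0$, so $P_1$ admits a map to $0$ and hence $P_1 \cong 0$ by strictness of the initial object. Symmetrically $P_2 \cong 0$, whence $G Y \cong P_1 + P_2 \cong 0$. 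I expect this final move to be the main obstacle: the preceding steps are diagram chasing, but the conclusion rests essentially on disjointness and strictness of coproducts, i.e.\ on extensivity of $\C$ --- just as strictness of $0$ was the crux of the nullary theorem.
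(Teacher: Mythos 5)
Your proposal is correct and follows essentially the same route as the paper's proof: your map $p$ is exactly the paper's $f_Y$ (the paper's $c_\Bool \comp \pair{\btt}{\bff}$ is your $w$), your fixed-point identity $s \comp p = p$ is the paper's $\bnot \comp f_Y = f_Y$, and the concluding extensivity argument (pulling back along the two coprojections, then using disjointness and strictness of $0$) is the same. The only cosmetic difference is that you apply strictness to each pullback summand separately, whereas the paper copairs the two maps into $0$ and applies strictness once to $G Y$.
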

\begin{proof}
Let $\Bool = (\btt : 1) + (\bff : 1)$. Then, for any $Y$, the map
\[
\small
f_Y = 
\xymatrix{
G Y \ar[r]^-{\pair{!}{\id}} 
& 1 \times G Y \ar[r]^-{\pair{\btt}{\bff}} 
  & (\Bool \times \Bool) \times G Y \ar[r]^{c_\Bool \times \id} 
    & F \Bool \times G Y \ar[r]^-{\theta_{\Bool,Y}}
      & \Bool \times Y \ar[r]^{\fst}
        & \Bool
}
\]
has the property that $\bnot \comp f_Y = f_Y$:
\[
\small
\xymatrix@R=1.5pc{
& & (\Bool \times \Bool) \times G Y \ar[d]^{\sym \times \id} \ar[dr]^{c_\Bool \times \id} \\
G Y \ar[r]^-{\pair{!}{\id}} 
& 1 \times G Y \ar[ur]^-{\pair{\btt}{\bff}} 
               \ar[r]^-{\pair{\bff}{\btt}} \ar[dr]_-{\pair{\btt}{\bff}} 
  & (\Bool \times \Bool) \times G Y \ar[d]^{(\bnot \times \bnot) \times \id} \ar[r]^{c_\Bool \times \id} 
    & F \Bool \times G Y \ar[d]^{F \bnot \times \id} \ar[r]^-{\theta_{\Bool,Y}}
      & \Bool \times Y \ar[d]^{\bnot \times \id} \ar[r]^{\fst}
        & \Bool \ar[d]^{\bnot}\\
& & (\Bool \times \Bool) \times G Y \ar[r]^{c_\Bool \times \id} 
    & F \Bool \times G Y \ar[r]^-{\theta_{\Bool,Y}}
      & \Bool \times Y \ar[r]^{\fst}
        & \Bool
}
\]

By stability of coproducts under pullback in an extensive category,
we can pull the coprojections of $\Bool$ back along $f_Y$
\[
\small
\xymatrix@R=1.5pc{
P Y \ar[d]_{i_Y} \ar[r]^{h_Y}
& 1 \ar[d]^{\btt} \\
G Y \ar[r]^{f_Y}
& \Bool \\
Q Y \ar[u]^{j_Y} \ar[r]^{k_Y}
& 1 \ar[u]_{\bff}
}
\]
and the result is a pullback again.

Now we have
\[
\small
\xymatrix@R=0.8pc{
& & 1 \ar[ddr]^{\btt} \\
\\
P Y \ar[uurr]^{h_Y} \ar[ddrr]_{h_Y} \ar[r]^{i_Y}
& G Y \ar[rr]^{f_Y} \ar[dr]^{f_Y}
  & & \Bool \\
& & \Bool \ar[ur]^{\bnot} \\
& & 1 \ar[u]^{\btt} \ar[uur]_{\bff}
}
\]
so by disjointness of coproducts in an extensive category we have a
map $h'_Y : PY \to 0$ as a unique map into the pullback $0$ of $\btt$
and $\bff$:
\[
\small
\xymatrix@R=1.5pc{
& & 1 \ar[dr]^{\btt} \\
P Y \ar[urr]^{h_Y} \ar[drr]_{h_Y} \ar@{.>}[r]^{h'_Y} 
& 0 \ar[ur] \ar[dr]
  & & \Bool \\
& & 1 \ar[ur]_{\bff}
}
\]

Similarly we get a map $k'_Y : QY \to 0$. Hence we have a map
$f'_Y : G Y \to 0$ from copairing $h'_Y$ and $k'_Y$:
\[
\small
\xymatrix@R=1.5pc{
P Y \ar[d]_{i_Y} \ar[dr]^{h'_Y} \\
G Y \ar@{.>}[r]^{f'_Y}
& 0 \\
Q T \ar[u]^{j_Y} \ar[ur]_{k'_Y}
}
\]
Since the initial object is strict in an extensive category, it
follows that $G Y \cong 0$.
\qed\end{proof}

\medskip

The degeneracy problem can be overcome by switching to a residual
version of interaction laws, discussed in detail in 
Section~\ref{sec:residual} below.
As a sneak preview, given a monad $R$ on $\C$, an
$R$-\emph{residual functor-functor interaction law} is given by
two endofunctors $F$, $G$ 
and a family of maps
$\phi : F X \times G Y \to R(X \times Y)$ natural in $X$, $Y$. The
monoidal structure of the category $\Int(\C,R)$ of $R$-residual functor-functor interaction laws relies on the monad
structure of $R$. Typically, one would use the maybe, finite
nonempty multiset or finite multiset monad as $R$.

\subsection{On the structure of $\Int(\C)$}

We now look at some ways to construct functor-functor
interaction laws systematically.

\paragraph{``Stretching''}

Given a functor-functor interaction law $(F, G, \phi)$ and natural
transformations $f : F' \to F$ and $g : G' \to G$, we have a
functor-functor interaction law $(F', G', \phi \comp (f \times g))$.

\paragraph{Self-duality}

For any functor-functor interaction law $(F, G, \phi)$, we have
another functor-functor interaction law
$(F, G, \phi)\rev = (G, F, \phi\rev)$ where
$\phi\rev_{X, Y} = \sym_{Y, X} \comp \phi_{Y, X} \comp \sym_{FX,
  GY}$.
This object mapping extends to maps by $(f, g)\rev = (g, f)$, so we
have a functor $(-)\rev : (\Int(\C))\op \to \Int(\C)$.  The functor
$(-)\rev$ is an isomorphism between $(\Int(\C))\op$ and $\Int(\C)$.

\paragraph{The final functor-functor interaction law}

The final functor-functor
interaction law is $(1, 0, \phi)$ where
$ 
\phi_{X,Y} = 
\xymatrix@C=1pc{
1 \times 0 \ar[r]^-{\snd} 
& 0 \ar[r]^-{?}
  & X \times Y
}
$. 
By self-duality, the initial functor-functor interaction law is
$(0, 1, \phi\rev)$.

\paragraph{Product of two functor-functor interaction laws}
Given two functor-functor
interaction laws $(F_0, G_0, \phi_0)$ and $(F_1, G_1, \phi_1)$, their
product is $(F_0 \times F_1, G_0 + G_1, \phi)$ where
\begin{eqnarray*}
\phi_{X,Y} 
& = &  
\xymatrix{
(F_0 X \times F_1 X) \times (G_0 Y + G_1 Y) \ar[r]^-{\rdist}  
&
} \\
& & \quad
\xymatrix@C=5pc{
(F_0 X \times F_1 X) \times G_0 Y + (F_0 X \times F_1 X) \times G_1 Y \ar[r]^-{\fst \times \id + \snd \times \id} 
& 
} \\
& & \quad 
\xymatrix@C=4pc{
  F_0 X \times G_0 Y + F_1 X \times G_1 Y \ar[r]^-{{\phi_0}_{X,Y} + {\phi_1}_{X,Y}}
&  X \times Y + X \times Y \ar[r]^-{\nabla} 
  & X \times Y
}   
\end{eqnarray*}
By self-duality, the \emph{coproduct} of $(G_0, F_0, \phi_0\rev)$ and $(G_1, F_1, \phi_1\rev)$ is $(G_0 + G_1, F_0 \times F_1, \phi\rev)$. 

\paragraph{An initial algebra-final coalgebra construction}

Assume that $\C$ has the relevant initial algebras and final
coalgebras. Given functors $F, G : \C \times \C \to \C$
and a family of maps
$\phi_{X,Y,W,Z} : F (X, Z) \times G (Y, W) \to X \times Y + Z \times
W$ natural in $X, Y, Z, W$.
Then we have an interaction law $(F', G', \phi')$ where $F' X = \mu
Z.\, F(X, Z)$, $G' X = \nu W.\, G(Y, W)$ and $\phi'$ is constructed
as follows. We equip $G' Y \fun (X \times Y)$ with an $F(X, {-})$-algebra structure $\theta^0_{X,Y}$ by currying the map 
\begin{eqnarray*}
\theta_{X,Y} 
& = &  
\xymatrix@C=4pc{
F (X, G' Y \fun (X \times Y)) \times G' Y \ar[r]^-{\id \times \out_{G (Y, {-})}}
& }
\\
& & \quad
\xymatrix@C=6pc{
F (X, G' Y \fun (X \times Y)) \times G (Y, G' Y) 
  \ar[r]^-{\phi_{X,Y,G' Y \fun (X \times Y), G' Y}}  
 &
}
\\
& & \quad 
\xymatrix@C=2pc{
X \times Y + (G' Y \fun (X \times Y)) \times G' Y
   \ar[r]^-{\id + \ev} 
& X \times Y + X \times Y 
   \ar[r]^-{\nabla} 
  & X + Y 
}
\end{eqnarray*}
The map $\phi'_{X, Y}$ is obtained by uncurrying the corresponding unique map
$\phi^0_{X,Y} : F' X \to G' Y \fun (X \times Y)$ from the initial
$F(X, {-})$-algebra.

\paragraph{Restricting to fixed $F$ or $G$} 

Sometimes it is of interest to focus on interaction laws of a fixed
first functor $F$ or a fixed second functor $G$ (and accordingly 
on interaction law maps with the first resp.\ the second natural 
transformation the identity natural transformation on $F$ resp.\ $G$). 
We denote the
corresponding categories by $\Int(\C)|_{F,-}$ and
$\Int(\C)|_{-,G}$. The isomorphism of categories
$\Int(\C)\op \cong \Int(\C)$ given by $(-)\rev$ restricts to
$(\Int(\C)|_{F,-})\op \cong \Int(\C)|_{-,F}$.

The final object of $\Int(\C)|_{F,-}$ is $(F, 0, \phi)$ where
\[
\phi_{X,Y} = 
\xymatrix{
F X \times 0 \ar[r]^-{\snd} 
& 0 \ar[r]^-{?} 
  & X \times Y
}
\]
By self-duality, the initial object of $\Int(\C)|_{-,F}$ is 
$(0, F, \phi\rev)$.
About the initial object of $\Int(\C)|_{F,-}$ we will see in the next
subsection.

\subsection{Functor-functor interaction in terms of the dual}

If $\C$ is Cartesian closed, then we can define the
\emph{dual} $\d{G}$ of an \emph{endofunctor} $G$ on $\C$ by
\[
\d{G} X = \mbox{$\int_Y$} G Y \fun (X \times Y) 
\]
provided that this end exists, 
and the \emph{dual} $\d{g} : \d{G} \to \d{G'}$ of a \emph{natural
  transformation} $g : G' \to G$ by
\[
\d{g}_X = \mbox{$\int_Y$} g_Y \fun (X \times Y) 
\]
This construction is contravariantly functorial, i.e., if the dual is
everywhere defined, then we have $\d{(-)} : [\C, \C]\op \to [\C, \C]$.
The existence of all the ends required for this is a strong condition
(e.g., a small category that has all limits under classical logic is
necessarily a preorder by an argument by Freyd~\cite{MLan:CWM}). But for
well-definedness and functoriality of $\d{(-)}$ in the general case,
it suffices to restrict it to those endofunctors on $\C$ that happen
to have the dual or, if one so wishes, to some well-delineated smaller
class of functors that are guaranteed to have it (e.g., to
finitary functors if $\C$ is locally finitely presentable).  For
$\d{(-)}$ to be a contravariant endofunctor on some full subcategory
of $[\C, \C]$, we can restrict it to those endofunctors on $\C$ that
are dualizable any finite number of times or to some other class
of functors closed under the dual. Throughout this paper, we
deliberately ignore this existence issue: we either explicitly prove
for the ends of interest that they exist or we use such ends on the
assumption that they happen to exist. 

We have
\[
\renewcommand{\arraystretch}{1.5}
\begin{array}{l}
\int_Y \C(G Y, \overbrace{\mbox{$\int_X$} F X \fun (Y \times X)}^{\d{F} Y})
\cong \int_{Y,X} \C(G Y \times F X, Y \times X) \\
\quad \cong \int_{X,Y} \C(F X \times G Y, X \times Y)
\cong \int_X \C(F X, \underbrace{\mbox{$\int_Y$} G Y \fun (X \times Y)}_{\d{G} X})
\end{array}
\]
where by the top-level ends we just indicate collections (not
necessarily sets) of natural transformations, so existence is not an
issue. 


Thus, to have a functor-functor interaction law of $F$, $G$ is the
same as to have a natural transformation $\phia : F \to \d{G}$ or a
natural transformation $\phib : G \to \d{F}$.

Under the first of these identifications, an interaction
law map between $(F, G, \phia)$ and $(F', G', \phia')$ is given by
natural transformations $f : F \to F'$ and $g : G' \to G$ satisfying
$\d{g} \comp \phia = \phia' \comp f$.
Under the second one, an interaction law map between $(F, G, \phib)$
and $(F', G', \phib')$ is given by natural transformations
$f : F \to F'$ and $g : G' \to G$ satisfying
$\phib \comp g = \d{f} \comp \phib'$.

We have thus established that these categories are isomorphic:
\begin{enumerate}[label=(\roman*)]
\item[(o)] the category $\Int(\C)$ of functor-functor interaction laws;
\item the comma category $[\C,\C] \dn \d{(-)}$ of triples of two
    functors $F, G$ and a natural transformation $F \to \d{G}$;
\item the comma category ${\d{(-)}}\op \dn [\C,\C]\op$ of triples of
  two functors $F, G$ and a natural transformation $G \to \d{F}$.
\end{enumerate}

From these observations it is immediate that
$\Int(\C)|_{-,G} \cong [\C, \C] / \d{G}$ and
$\Int(\C)|_{F,-} \cong \d{F} \backslash [\C,\C]\op$.
Hence, the initial
object of $\Int(\C)|_{-,G}$ is $(0, G, \ldots)$ while the final object
is $(\d{G}, G, \ldots)$. The initial object of $\Int(\C)|_{F,-}$ is
$(F, \d{F}, \ldots)$ while the final object is $(F, 0, \ldots)$.

\subsection{Dual for some constructions on functors}

Here are constructions of the dual for some basic constructions of
functors.


\paragraph{Dual of the identity functor}
$\d{\Id} \cong \Id$.
\begin{proof}
  Let $G Y = Y$. Then
\begin{eqnarray*}
\d{G} X & = & \int_Y Y \fun (X \times Y) \\ 
& \cong & \int_Y (1 \fun Y) \fun (X \times Y) \\
& \cong & X \times 1 \\
& \cong & X
\end{eqnarray*}
\end{proof}

\paragraph{Duals of terminal functor, products of a functor, 
  initial functor, coproduct of two functors}
\begin{itemize}

\item  Let $G\, Y = 1$. Then $\d{G}\, X \cong 0$.
  \begin{proof}
    \begin{eqnarray*}
\d{G} X & = & \int_Y 1 \fun (X \times Y) \\
& \cong & \int_Y X \times Y \\
& \cong & X \times \int_Y Y \\ 
& \cong & X \times 0 \\ 
& \cong & 0
\end{eqnarray*}

  \end{proof}

\item  Let $G\, Y = A \times G' Y$. Then $\d{G}\, X \cong A \fun \d{G'} X$.
  \begin{proof}
    \begin{eqnarray*}
\d{G} X & = & \int_Y A \times G_0 Y \fun (X \times Y) \\
& \cong &  \int_Y A \fun (G_0 Y \fun (X \times Y)) \\
& \cong & A \fun \int_Y G_0 Y \fun (X \times Y) \\
& = & A \fun \d{G_0} X
\end{eqnarray*}
  \end{proof}

\item  A little more generally, for $G\, Y = \sum a : A .\, G' a\, Y$, one
  has $\d{G}\, X \cong \prod a : A.\, \d{(G' a)}\, X$.
  
\item  Specializing to $A = 0$ resp.\ $A = \Bool$, we learn: Let $G\, Y =
  0$.
  Then $\d{G} X \cong 1$.  Let $G\, Y = G_0\, Y + G_1\, Y$. Then
  $\d{G}\, X \cong \d{G_0}\, X \times \d{G_1}\, X$.
\end{itemize}

\paragraph{Dual of exponents of the identity functor}
Let $G Y = A \fun Y$. Then $\d{G} X \cong A \times X$.
\begin{proof}
  \begin{eqnarray*}
\d{G} X & = & \int_Y (A \fun Y) \fun (X \times Y) \\
& \cong & X \times A \\
& \cong & A \times X
\end{eqnarray*}

\end{proof}

\begin{example} \label{ex:nelists-a}
  Let
  $G\, Y = Y^+ = \mu Z.\, Y \times (1 + Z) \cong \sum n : \Nat.\,
  ([0..n] \fun Y)$ (nonempty lists). We
  have $\d{G} X \cong \prod n : \Nat.\, ([0..n] \times X)$.
\end{example}

Sometimes only a ``lower bound'' on the dual of a functor constructed
from some given functors can be expressed in terms of their
duals. This holds for the composition of two general functors, 
incl.\ for exponents of a general functor.

\paragraph{Dual of exponents of a general functor}
  Let $G\, Y = A \fun G'\, Y$. For a general $G'$, we only have a canonical natural
  transformation with components
  $A \times \d{G'}\, Y \to \d{G}\, Y$.
  \begin{proof}
\begin{eqnarray*}
\d{G} X & = & \int_Y (A \fun G' Y) \fun (X \times Y) \\
& \leftarrow & \int_Y A \times (G' Y \fun (X \times Y)) \\
& \cong & A \times \int_Y G' Y \fun (X \times Y) \\
& = & A \times \d{G'} X
\end{eqnarray*}
  \end{proof}

\paragraph{Dual of composition of two general functors}
For general $G_0$, $G_1$, we only have the canonical natural
transformation
$\mm^{G_0,G_1} : \d{G_0} \cdot \d{G_1} \to \d{(G_0 \cdot G_1)}$.
\begin{proof}
  \begin{eqnarray*}
\d{G} X & = & \int_Y G_0 (G_1 Y) \fun (X \times Y) \\
& \leftarrow & \int_Y \int_Z (Z \fun G_1 Y) \fun (G_0 Z \fun (X \times Y)) \\ 
& \cong & \int_Z G_0 Z \fun \int_Y (Z \fun G_1 Y) \fun (X \times Y) \\ 
& \leftarrow & \int_Z G_0 Z \fun \int_Y (G_1 Y \fun (X \times Y)) \times Z \\
& \cong & \int_Z G_0 Z \fun ((\int_Y G_1 Y \fun (X \times Y)) \times (\int_Y Z)) \\
& = & \int_Z G_0 Z \fun (\d{G_1} X \times Z) \\
& = & \d{G_0} (\d{G_1} X)
\end{eqnarray*}

\end{proof}

This
hints that $\d{(-)} : [\C,\C]\op \to [\C,\C]$ is not monoidal, but
only lax monoidal (see~Section~\ref{sec:monad-comonad-d-sd}).

\begin{example} \label{ex:update-a}
  Let $G_0\, Y = A \fun Y$, $G_1\, Y = B \times Y$, so
  $G\, Y = (G_0 \cdot G_1)\, Y = A \fun (B \times Y) \cong \linebreak (A \fun B) \times (A
  \fun Y)$.
  The dual of $G$ is $\d{G}\, X \cong (A \fun B) \fun (A \times X)$
  rather than $(\d{G_0} \cdot \d{G_1})\, X \cong A \times (B \fun X)$ 
  as we might perhaps expect.  
  We
  saw the interaction law of $G$ with $\d{G}$ in Example~\ref{ex:update-o}. The
  canonical natural transformation
  $\mm^{G_0,G_1} : \d{G_0} \cdot \d{G_1} \to \d{G}$ is
  $\mm^{G_0,G_1}\, (a, f) = \lambda g.\, (a, f\, (g\, a))$.\er{What is this reference? The general interaction between a functor and its dual?} \tu{fixed now, it
was one of the very first examples in the functor-functor part}
\end{example}

\section{Monad-comonad interaction}
\label{sec:monad-comonad}

\subsection{Monad-comonad interaction laws}


In a monad-comonad interaction law, the allowed computations (the chosen notion of computation) must
include ``just returning'' and be closed under sequential composition,
so they are defined by a monad rather than a functor. To match this,
the allowed machine behaviors (the notion of machine behavior) are defined by a comonad.  The idea is
that interaction of a ``just returning'' computation should terminate
immediately (in the initial state of the given machine behavior)
whereas interaction of a sequence of computations should amount to a
sequence of interactions. 

\medskip

We define a \emph{monad-comonad interaction law} on $\C$ to be given by a monad
$T = (T, \eta, \mu)$ and a comonad $D = (D, \eps, \de)$ together with
a family $\psi$ of maps
\[
\psi_{X,Y} : T X \times D Y \to X \times Y
\]
natural in $X$ and $Y$ (i.e., a functor-functor interaction law of
$T$, $D$ where $T$ and $D$ carry a monad resp.\ comonad structure)
such that also
\begin{equation} \label{eq:mcil-conds}
\small
\xymatrix@R=0.8pc@C=1.5pc{
& X \times Y \ar@{=}[r]
  & X \times Y \ar@{=}[dd] \\
X \times D Y \ar[ur]^{\id \times \eps_Y}  \ar[dr]_{\eta_X \times \id} 
& & \\
& T X \times D Y \ar[r]^-{\psi_{X,Y}}
  & X \times Y 
}
\quad
\xymatrix@R=0.8pc@C=1.5pc{
& T T X \times D D Y \ar[r]^{\psi_{TX, DY}}
 & T X \times D Y \ar[r]^{\psi_{X,Y}}
   & X \times Y \ar@{=}[dd] \\
T T X \times D Y \ar[ur]^{\id \times \de_Y} \ar[dr]_{\mu_X \times \id} 
& & & \\
& T X \times D Y \ar[rr]^-{\psi_{X,Y}}
  & & X \times Y
}
\end{equation}

\begin{example}
  Take $T X = A \fun X$, $D Y = A \times Y$ and
  $\psi\, (f, (a, y)) = (f\, a, y)$ for a fixed object $A$.  The
  functors $T$ and $D$ are a monad (a reader monad) resp.\ a comonad
  and $\psi$ meets the conditions (\ref{eq:mcil-conds}).
\end{example}

\begin{example}
  Take $T X = B \times X$, $D Y = B \fun Y$,
  $\psi\, ((b, x), g) = (x, g\, b)$ for a fixed monoid $B$. The
  functors $T$, $D$ are a monad (a writer monad) resp.\ a comonad
  and $\psi$ meets the requisite conditions.
\end{example}

\begin{example} \label{ex:update-b}
  Take $T X = A \fun (B \times X)$, $D Y = A \times (B \fun Y)$,
  $\psi\, (f, (a, g)) = \letin{(b, x) \leftarrow f\, a}{(x, g\, b)}$
  for a fixed monoid $B$ acting on a fixed object $A$. The functors
  $T$, $D$ are a monad (an update monad \cite{AU14updmnd}) resp.\ a comonad and $\psi$
  meets the requisite conditions.
\end{example}


  Monad-comonad interaction laws are essentially the same as monoid objects in
  the monoidal category $\Int(\C)$ of functor-functor interaction laws.
  To be precise, a monad-comonad interaction law
  $((T, \eta, \mu), \linebreak (D, \eps, \de), \psi)$ yields a monoid
  $((T, D, \psi), (\eta, \eps), (\mu, \de))$ and vice versa.


  A \emph{monad-comonad interaction law map} between $(T, D, \psi)$,
  $(T', D', \psi')$ is a pair $(f : T \to T', g : D' \to D)$ of a
  monad map and a comonad map that, as a pair of natural
  transformations between the underlying functors, is a
  functor-functor interaction law morphism between the underlying
  functor-functor interaction laws.

Monad-comonad interaction law maps correspond to monoid
morphisms in $\Int(\C)$. Thus monad-comonad interaction laws form a
category $\mcInt(\C)$ isomorphic to
the category $\Mon(\Int(\C))$.

\subsection{A degeneracy result}

\paragraph{Monads with an associative operation}

Here is a degeneracy theorem for monad-comonad interaction
laws.
\begin{theorem}
If a monad $T$ has an associative binary operation, i.e., family of
maps $c_X : X \times X \to T X$ natural in $X$ satisfying
\[
\small
\xymatrix@R=0.8pc{
(X \times X) \times X \ar[dd]_{\ass} \ar[r]^-{c_X \times \eta_X}
& T X \times T X \ar[r]^-{c_{TX}} 
  & T T X \ar[dr]^{\mu_X} \\
& & & T X \\
X \times (X \times X) \ar[r]^-{\eta_X \times c_X}
& T X \times T X \ar[r]^-{c_{TX}} 
  & T T X \ar[ur]_{\mu_X}    
}
\]
then, for any comonad $D$ and interaction law $\psi_{X,Y} : T X \times
D Y \to X \times Y$, we have
\[
\small
\xymatrix@R=1.5pc@C=2pc{
(X \times X) \times X \ar[d]_{\fst \times \id \times \id} \times D Y \ar[r]^-{c_X \times \eta_X \times \id}
& T X \times T X \times D Y \ar[r]^-{c_{TX} \times \id} 
  & T T X \times D Y \ar[r]^-{\mu_X \times \id} 
    & T X \times D Y \ar[dr]^-{\psi_{X,Y}} \\
X \times X \times D Y \ar[rrr]^-{c_X \times \id}
& & 
    & T X \times D Y \ar[r]^-{\psi_{X,Y}} 
      & X \times Y \\
X \times (X \times X) \times D Y \ar[u]^{\id \times \snd \times \id} \ar[r]^-{\eta_X \times c_X \times \id}
& T X \times T X \times D Y  \ar[r]^-{c_{TX} \times \id} 
  & T T X \times D Y  \ar[r]^-{\mu_X \times \id}  
    &  T X \times D Y \ar[ur]_-{\psi_{X,Y}} 
}
\]
\end{theorem}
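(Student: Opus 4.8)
I read the conclusion as the commutation of a diagram with three horizontal legs into $X \times Y$; writing things on generic elements (as the paper does), the top leg sends $(((x_1,x_2),x_3),d)$ to $\psi_{X,Y}(\mu_X(c_{TX}(c_X(x_1,x_2),\eta_X x_3)),d)$, the bottom leg sends $((x_1,(x_2,x_3)),d)$ to $\psi_{X,Y}(\mu_X(c_{TX}(\eta_X x_1,c_X(x_2,x_3))),d)$, and the middle leg sends $((x_1,x_3),d)$ to $\psi_{X,Y}(c_X(x_1,x_3),d)$. The first thing I would record is that both vertical maps, $\fst \times \id \times \id$ and $\id \times \snd \times \id$, land on the very same datum $(x_1,x_3,d)$, so the two faces to be checked are precisely $T = M$ and $B = M$ with $M$ read off there. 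I would then dispatch the relation between the two outer legs for free: the associativity hypothesis on $c$ is exactly the equality of the maps $\mu_X \comp c_{TX} \comp (c_X \times \eta_X)$ and $\mu_X \comp c_{TX} \comp (\eta_X \times c_X) \comp \ass$ into $TX$, so post-composing with $\psi_{X,Y}(-,d)$ (and pairing with $\id$ on $DY$) gives $T = B$ outright. Hence it suffices to prove a single face, say $T = M$, and the other follows.

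For $T = M$ the plan is to remove the outer $\mu_X$ using the multiplication law of~(\ref{eq:mcil-conds}), rewriting $\psi_{X,Y} \comp (\mu_X \times \id)$ as $\psi_{X,Y} \comp \psi_{TX,DY} \comp (\id \times \de_Y)$. This turns $T$ into a two-stage interaction: the operation-tree $c_{TX}(c_X(x_1,x_2),\eta_X x_3)$ first interacts, at the value object $TX$, with the comultiplied behavior $\de_Y d$, and the resulting $TX$-computation then interacts with the leftover state. I would use naturality of $c$ to repackage this tree as $c$ applied along a single map, exhibiting the branch carrying $\eta_X x_3$ as a ``just returning'' computation, and then aim to invoke the unit law of~(\ref{eq:mcil-conds}) so that what survives is the interaction of $c_X(x_1,x_3)$, i.e.\ exactly $M$. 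Naturality of $\psi$ in both variables handles the routine repackaging of states around these steps.

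The hard part is this last collapse. It is tempting to argue that the inner interaction simply selects the pure branch so that the unit law fires, but that is false in general: already the list monad with its concatenation operation admits interaction laws that instead select the operation branch, so the returning computation is only consumed after the two stages are combined. The argument therefore cannot be made branch-by-branch; it must thread $\eta_X x_3$ through the comultiplication $\de_Y$ and the inner $\psi_{TX,DY}$ uniformly, and this is where the associativity of $c$ re-enters, repositioning the returning computation so that the unit law can be applied once at the right level. Getting this interleaving of the multiplication law, the unit law and associativity correct --- rather than the surrounding naturality bookkeeping --- is the crux of the proof.
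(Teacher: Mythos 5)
Your reduction is sound and agrees with the skeleton of the paper's argument: the top and bottom legs are identified outright by post-composing the associativity hypothesis with $\psi_{X,Y}$, the two vertical projections agree modulo $\ass$, so a single face suffices; and the first move on that face is indeed to replace $\psi_{X,Y} \comp (\mu_X \times \id)$ by $\psi_{X,Y} \comp \psi_{TX,DY} \comp (\id \times \de_Y)$. You also correctly diagnose that one cannot simply claim the outer interaction selects the pure branch. But at exactly that point your proposal stops: ``threading $\eta_X x_3$ through $\de_Y$ uniformly'' and ``repositioning the returning computation'' name the difficulty without resolving it, and the device that resolves it is absent from your sketch.

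The missing idea is to exploit extensivity of $\C$ --- this is why the theorem lives in an extensive category --- and perform a case analysis not on the computation tree but on the behaviors. Concretely: interacting $c_\Bool(\btt,\bff)$ against a behavior gives a map $DY \to \Bool \times Y \cong Y + Y$; pulling the two coprojections back along it splits $DY$ as a coproduct $PY + QY$ of ``left-choosing'' and ``right-choosing'' behaviors, and naturality of $c$ and $\psi$ shows that a behavior in $PY$ (resp.\ $QY$) extracts the left (resp.\ right) argument of $c_X$ at \emph{every} object $X$, with final state independent of those arguments. A second pullback, of $\de_Y$ along the coprojections of $DDY \cong P(DY) + Q(DY)$, splits $DY$ once more as $P'Y + Q'Y$ according to which summand the comultiplied behavior lands in. The two composites you want to identify (the one carrying $c_{TX}(\eta_X x_1, c_X(x_2,x_3))$ and the one carrying $c_{TX}(\eta_X x_1, \eta_X x_3)$, both after $\de_Y$ and two rounds of $\psi$) are then compared summand-by-summand and shown to restrict to one and the same pair of maps, hence to coincide by uniqueness of copairing. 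On one summand they agree for trivial reasons; on the other, associativity enters in a precise, local way: interact \emph{both sides of the associativity equation} with a behavior from that summand, evaluate the outer interaction on each side using the established choice-behavior, and apply the unit law on the side where the pure computation survives; the resulting equation is exactly the one your collapse needs. So your hunch that the argument ``cannot be made branch-by-branch'' is only half right: it cannot be made by cases on the computation, but it is made by cases on the behavior, and without that coproduct decomposition (or an equivalent substitute) your outline has a hole precisely at the step you yourself identify as the crux.
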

\begin{proof}
  For any $Y$, by distributivity in an extensive category,
$\Bool \times Y$ is a coproduct of $Y$ and $Y$ with coprojections
$\pair{\btt \comp {!}}{\id}$ and $\pair{\bff \comp {!}}{\id}$.

By stability of coproducts under pullback in an extensive category, we
can pull
$\theta_{\Bool, Y} \comp \pair{c_\Bool \comp \pair{\btt}{\bff} \comp
  {!}}{\id} : D Y \to \Bool \times Y$
back along the coprojections of $\Bool \times Y$ and get
that $D Y$ is a coproduct of two objects $P Y$ and $Q Y$ with coprojections 
$i_Y$ and $j_Y$:
\[
\small
\xymatrix@R=1.5pc@C=3pc{ 
P Y \ar[d]_{i_Y} \ar[rrr]^{h_Y}
& & & Y \ar[d]^{\pair{\btt \comp {!}}{\id}} \\
D Y \ar[r]^-{\pair{\pair{\btt}{\bff} \comp
  {!}}{\id}}
& \Bool \times \Bool \times D Y \ar[r]^-{c_\Bool \times \id}
  & T \Bool \times D Y \ar[r]^-{\psi_{\Bool,Y}}
    & \Bool \times Y \\
Q Y \ar[u]^{j_Y} \ar[rrr]^{k_Y}
& & & Y \ar[u]_{\pair{\bff \comp {!}}{\id}} \\
}
\]

It is easily checked that we have 
\[
\small
\xymatrix@R=1.5pc{
X \times X \times P Y \ar[d]_{\id \times i_Y} \ar[r]^-{\fst \times \id} 
& X \times P Y \ar[rd]^-{\id \times h_Y} \\
X \times X \times D Y \ar[r]^-{c_X \times \id}
& T X \times D Y \ar[r]^-{\psi_{X,Y}} 
  & X \times Y \\
X \times X \times Q Y \ar[u]^{\id \times j_Y} \ar[r]^-{\snd \times \id}
  & X \times Q Y \ar[ru]_-{\id \times k_Y}
}
\]

Also by stability of coproducts under pullback, we can pull $\de_Y : D Y \to D D Y$ back along
the coprojections of $D D Y$ and get that $D Y$ is a coproduct of two objects $P' Y$ and $Q' Y$ with coprojections $i'_Y$ and $j'_Y$:
\[
\small
\xymatrix@R=1.5pc{
P' Y \ar[d]_{i'_Y} \ar[r]^-{f_Y} 
& P D Y \ar[d]^{i_{D Y}} \\
D Y \ar[r]^{\de_Y}
& D D Y \\
Q' Y \ar[u]^{j'_Y} \ar[r]^-{g_Y}
& Q D Y \ar[u]_{j_{D Y}}
}
\]

Hence, for any $X$, by distributivity, also
$X \times (X \times X) \times D Y$ is a coproduct of
$X \times (X \times X) \times P' Y$ and
$X \times (X \times X) \times Q' Y$ with coprojections
$\id \times i'_Y$ and $\id \times j'_Y$.

Now, the two maps
$\psi_{X, Y} \comp \psi_{TX, DY} \comp (c_{T X} \comp \eta_X \times
c_X) \times \de_Y$
and
$\psi_{X, Y} \comp \psi_{TX, DY} \comp (c_{T X} \comp \eta_X \times
(\eta_X \comp \snd)) \times \de_Y$
both satisfy both triangles of the unique copair of
$\psi_{X, Y} \comp (\eta_X \comp \fst) \times (h_{D Y} \comp f_Y)$
and
$\psi_{X, Y} \comp (c_X \comp \snd) \times (k_{D Y} \comp g_Y)$,
i.e., they are the same map. Indeed, we have
\[
\scriptsize
\hspace*{-4mm}
\xymatrix@R=1.5pc{
& & X \times P D Y \ar[dr]^-{\eta_X \times \id} \\
X \times (X \times X) \times P' Y 
    \ar[d]^{\id \times i'_Y} \ar[r]^-{\id \times f_Y}
& X \times (X \times X) \times P D Y 
    \ar[d]^{\id \times i_{D Y}} \ar[r]^-{\eta_X \times c_X \times \id}
           \ar[ur]^-{\fst \times \id}
  & T X \times TX \times P D Y 
   \ar[d]^{\id \times i_{D Y}}  \ar[r]^-{\fst \times \id} 
    & T X \times P D Y \ar[dr]^-{\id \times h_{D Y}} \\
X \times (X \times X) \times D Y \ar[r]^-{\id \times \de_Y}
& X \times (X \times X) \times D D Y \ar[r]^-{\eta_X \times c_X \times \id}
  & T X \times T X \times D D Y \ar[r]^-{c_{TX} \times \id}
    & T T X \times D D Y \ar[r]^-{\psi_{TX,DY}}
      & T X \times D Y \ar[r]^-{\psi_{X,Y}}
        & X \times Y \\
X \times (X \times X) \times Q' Y 
    \ar[u]_{\id \times j'_Y} \ar[r]^-{\id \times g_Y}
& X \times (X \times X) \times Q D Y 
    \ar[u]_{\id \times j_{D Y}} \ar[r]^-{\eta_X \times c_X \times \id}
           \ar[dr]_-{\snd \times \id}
  & T X \times TX \times Q D Y 
   \ar[u]_{\id \times j_{D Y}}  \ar[r]^-{\snd \times \id} 
    & T X \times Q D Y \ar[ur]_-{\id \times k_{D Y}} \\
& & X \times X \times Q D Y \ar[ur]_-{c_X \times \id} 
}
\]
And, using associativity, we also have
\[
\scriptsize
\xymatrix@R=0.8pc{
& & X \times P D Y \ar[dr]^-{\eta_X \times \id} \\
X \times (X \times X) \times P' Y 
    \ar[d]^{\id \times i'_Y} \ar[r]^-{\id \times \snd \times f_Y}
& X \times X \times P D Y 
    \ar[d]^{\id \times i_{D Y}} \ar[r]^-{\eta_X \times \eta_X \times \id}
           \ar[ur]^-{\fst \times \id}
  & T X \times TX \times P D Y 
   \ar[d]^{\id \times i_{D Y}}  \ar[r]^-{\fst \times \id} 
    & T X \times P D Y \ar[dr]^-{\id \times h_{D Y}} \\
X \times (X \times X) \times D Y \ar[r]^-{\id \times \snd \times \de_Y}
& X \times X \times D D Y \ar[r]^-{\eta_X \times \eta_X \times \id}
  & T X \times T X \times D D Y \ar[r]^-{c_{TX} \times \id}
    & T T X \times D D Y \ar[r]^-{\psi_{TX,DY}}
      & T X \times D Y \ar[r]^-{\psi_{X,Y}}
        & X \times Y \ar@{=}[dddddd] \\
X \times (X \times X) \times Q' Y \ar[dd]^{\ass^{-1} \times \id}
    \ar[u]_{\id \times j'_Y} \ar[r]^-{\id \times \snd \times g_Y}
& X \times X \times Q D Y 
    \ar[u]_{\id \times j_{D Y}} \ar[r]^-{\eta_X \times \eta_X \times \id}
           \ar[dr]^-{\snd \times \id}
  & T X \times TX \times Q D Y 
   \ar[u]_{\id \times j_{D Y}}  \ar[r]^-{\snd \times \id} 
    & T X \times Q D Y \ar[ur]_-{\id \times k_{D Y}} \\
& & X \times Q D Y \ar[ur]^-{\eta_X \times \id} \\
(X \times X) \times X \times Q' Y \ar[r]^-{\id \times g_Y}
    \ar[d]^{\id \times j'_Y}
& (X \times X) \times X \times Q D Y \ar[uu]_{\fst \times \id}
     \ar[ur]^-{\snd \times \id} \ar[r]^-{c_X \times \eta_X \times \id} 
     \ar[d]^{\id \times j_{DY}}
  & T X \times T X \times Q D Y \ar[uur]_-{\snd \times \id} 
     \ar[d]^{\id  \times j_{DY}}       
   \\
(X \times X) \times X \times D Y \ar@{=}[d] \ar[r]^-{\id \times \de_Y}
& (X \times X) \times X \times D D Y \ar[r]^-{c_X \times \eta_X \times \id}
  & T X \times T X \times D D Y \ar[r]^-{c_X \times \id}
    & T T X \times D D Y \ar[uuuru]_-{\psi_{TX,DY}} \\
(X \times X) \times X \times D Y \ar[r]^-{c_X \times \eta_X \times \id}
& T X \times T X \times D Y \ar[r]^-{c_X \times \id}
  & T T X \times D Y \ar[ur]_-{TTX \times \de_Y} \ar[dr]^-{\mu_X \times \id}\\
& & & T X \times D Y \ar[rr]^-{\psi_{X,Y}}
      & & X \times Y \ar@{=}[dd] \\
X \times (X \times X) \times D Y \ar@{=}[d] \ar[uu]_{\ass^{-1} \times \id} \ar[r]^-{\eta_X \times c_X \times \id}
& T X \times T X \times D Y \ar[r]^-{c_X \times \id}
  & T T X \times D Y \ar[dr]^-{\id \times \de_Y} \ar[ur]_-{\mu_X \times \id}\\
X \times (X \times X) \times D Y \ar[r]^-{\id \times \de_Y}
& X \times (X \times X) \times D D Y \ar[r]^-{\eta_X \times c_X \times \id}
  & T X \times T X \times D D Y \ar[r]^-{c_X \times \id}
    & T T X \times D D Y \ar[r]^-{\psi_{TX,DY}}
      & T X \times D Y \ar[r]^-{\psi_{X,Y}}
        & X \times Y \\
X \times (X \times X) \times Q' Y 
  \ar@/^5pc/@{=}[uuuuuuuu]
  \ar[u]_-{\id \times j'_Y} 
  \ar[r]^-{\id \times g_Y}
& X \times (X \times X) \times Q D Y 
     \ar[u]_-{\id \times j_{DY}} 
     \ar[dr]_-{\snd \times \id} \ar[r]^-{\eta_X \times c_X \times \id}
  & T X \times T X \times Q D Y 
     \ar[u]_-{\id \times j_{DY}} \ar[r]^-{\snd \times \id}
    & T X \times Q D Y \ar[ur]_-{\id \times k_Y} \\
& & X \times X \times Q D Y \ar[ur]_-{c_X \times \id}
%
%
}
\]

The desired result now follows by the following calculation:
\[
\scriptsize
\xymatrix@R=1pc{
& T X \times D Y \ar[dr]^-{T\eta_X \times \id} \ar@{=}[rr]
  & & T X \times D Y \ar[rr]^-{\psi_{X,Y}} 
      & & X \times Y \ar@{=}[d] \\
X \times X \times D Y \ar[ur]^{c_X \times \id} \ar[r]^-{\eta_X \times \eta_X \times \id}
& T X \times T X \times D Y  \ar[r]^-{c_{TX} \times \id} 
  & T T X \times D Y  \ar[ur]^-{\mu_X \times \id} \ar[r]^{\id \times \de_Y}
    & T T X \times D D Y \ar[r]^-{\psi_{TX,DY}}
      & T X \times D Y \ar[r]^-{\psi_{X,Y}} 
        & X \times Y \ar@{=}[d] \\
X \times (X \times X) \times D Y \ar[u]^{\id \times \snd \times \id} \ar[r]^-{\eta_X \times c_X \times \id}
& T X \times T X \times D Y  \ar[r]^-{c_{TX} \times \id} 
  & T T X \times D Y  \ar[dr]_-{\mu_X \times \id} \ar[r]^{\id \times \de_Y}
    & T T X \times D D Y \ar[r]^-{\psi_{TX, DY}}
      & T X \times D Y \ar[r]^-{\psi_{X, Y}}
        & X \times Y \ar@{=}[d] \\
& & & T X \times D Y \ar[rr]_-{\psi_{X,Y}} 
      & & X \times Y
}
\]
\qed\end{proof}

\begin{example} \label{ex:nelists-b}
The monad $T X = X^+$ of nonempty lists (the free semigroup
delivering monad) comes with an associative operation
$\dblt_X : X \times X \to T\, X$ defined by
$\dblt\, (x_0, x_1) = [x_0, x_1]$. The degeneracy theorem tells us
that, while functor-functor interaction laws can accomplish this, no
monad-comonad interaction law can extract $x_1$ from a list
$[x_0, x_1, x_2]$ and more generally any middle element $x_i$
($0 < i < n+1$) from a list $[x_0, \ldots, x_{n+1}]$.
\end{example}

Just as functor-functor interaction laws can be generalized to a
residual variant to counteract degeneracies, so can monad-comonad
interaction laws (see Section~\ref{sec:residual}).

\subsection{On the structure of $\mcInt(\C)$}

We now explore the structure of the category $\mcInt(\C)$. As this is
the category of monoid objects of $\Int(\C)$, the structure of $\mcInt(\C)$
is in many respects similar to $\Int(\C)$. But there are also
important differences. 

\paragraph{``Stretching''}
Given a monad-comonad interaction law $(T, D, \psi)$, a monad morphism
$f : T' \to T$ and a comonad morphism $g : D' \to D$, we
have a monad-comonad interaction law $(T', D', \psi \comp f \times g)$.

\paragraph{Final and initial monad-comonad interaction laws}

The final monad-comonad interaction law is $(1, 0, \psi)$ where
$\psi_{X, Y} : 1 \times 0 \to X \times Y$ is the evident map. 

The initial monad-comonad interaction law is
$(\Id, \Id, \id_{\Id \times \Id})$.

\paragraph{Product of two monad-comonad interaction laws}

Given two monad-comonad
interaction laws $(T_0, D_0, \psi_0)$ and $(T_1, D_1, \psi_1)$, their
product is $(T_0 \times T_1, D_0 + D_1, \psi)$ where $\psi_{X,Y} :
(T_0 X \times T_1 X) \times (D_0 Y + D_1 Y) \to X \times Y$ is defined
as in Section~\ref{sec:functor-functor}. The product of the underlying
functors of the two monads is the underlying functor of their
product.

\paragraph{Coproduct of two monad-comonad interaction laws}

The coproduct of two monad-comonad interaction laws is given by the
coproduct of the two monads, the product of the two comonads and a
suitable natural transformation. The coproduct of two monads is
complicated to construct. For two ideal monads, it can be expressed in
terms of initial algebras of endofunctors on $\C \times \C$ (mutually
inductive types) \cite{GU04}.

\paragraph{Interaction laws of a composite monad}

Given two monad-comonad interaction laws $(T_0, D, \psi_0)$ and
$(T_1, D, \psi_1)$ and a monad-monad distributive law $\lambda$ of
$T_1$ over $T_0$.  Then $T_0 \cdot T_1$ is a monad. If $\psi_0$ and
$\psi_1$ are matching in the sense of commutation of
\[
\small
\xymatrix@R=0.8pc@C=3pc{
&
T_1T_0X \times DDY 
  \ar[r]^-{{\psi_1}_{T_0X, DY}} 
  & T_0X \times DY 
  \ar[r]^-{{\psi_0}_{X,Y}} 
    & X \times Y 
  \ar@{=}[dd] \\
T_1T_0 X \times D Y \ar[ur]^-{\id \times \de_Y} \ar[dr]_-{\lambda_X \times \de_Y}
\\
& 
T_0T_1X \times DDY
  \ar[r]^-{{\psi_0}_{T_1X, DY}} 
  & T_1X \times DY 
  \ar[r]^-{{\psi_1}_{X,Y}} 
    & X \times Y 
}
\]
then we have a monad-comonad interaction law $(T_0 \cdot T_1, D,
\psi)$ where
\[
\small
\psi_{X,Y} = 
\xymatrix@C=3pc{
T_0T_1X \times DY \ar[r]^{\id \times \de_Y}
& T_0T_1X \times DDY
  \ar[r]^-{{\psi_0}_{T_1X, DY}}
 & T_1X \times DY 
  \ar[r]^-{{\psi_1}_{X,Y}}
   & X \times Y
}
\]

The condition above is precisely the condition for $(\lambda, \id_D)$
to be a map between the functor-functor interaction laws $(T_1 \cdot
T_0, D, \psi_0 \comp \psi_1 \cdot (T_0 \times D) \comp (\id_{T_0 \cdot T_1} \times \de))$ and $(T_0 \cdot
T_1, D, \psi_1 \comp \psi_0 \cdot (T_1 \times D) \comp (\id_{T_0 \cdot T_1} \times \de))$.

\paragraph{Interaction laws of a composite monad and a composite
  comonad}

Given two monad-comonad interaction laws $(T_0, D_0, \psi_0)$ and
$(T_1, D_1, \psi_1)$, a monad-monad distributive law $\lambda$ of $T_1$
over $T_0$ and a comonad-comonad distributive law $\kappa$ of $D_0$
over $D_1$.  Then $T_0 \cdot T_1$ is a monad and $D_0 \cdot D_1$ is a
comonad. If $\psi_0$ and $\psi_1$ are matching in the sense of
commutation of
\[
\small
\xymatrix@R=0.8pc@C=3pc{
& T_1T_0X \times D_1D_0Y 
  \ar[r]^-{{\psi_1}_{T_0X, D_0Y}} 
  & T_0X \times D_0Y 
  \ar[r]^-{{\psi_0}_{X,Y}} 
    & X \times Y 
  \ar@{=}[dd] \\
T_1T_0 X \times D_0D_1 Y \ar[ur]^-{\id \times \kappa_Y} \ar[dr]_-{\lambda_X \times \id}
\\
& 
T_0T_1X \times D_0D_1Y
  \ar[r]^-{{\psi_0}_{T_1X, D_1Y}} 
  & T_1X \times D_1Y 
  \ar[r]^-{{\psi_1}_{X,Y}} 
    & X \times Y 
}
\]
then we have a monad-comonad interaction law $(T_0 \cdot T_1, D_0 \cdot D_1,
\psi)$ where
\[
\small
\psi_{X,Y} = 
\xymatrix@C=3pc{
T_0T_1X \times D_0D_1Y
  \ar[r]^-{{\psi_0}_{T_1X, D_1Y}} &
T_1X \times DY 
  \ar[r]^-{{\psi_1}_{X,Y}} &
X \times Y
}
\]

The condition above is precisely the condition for $(\lambda, \kappa)$
to be a map between the functor-functor interaction laws $(T_1 \cdot
T_0, D_1 \cdot D_0, \psi_0 \comp \psi_1 \cdot (T_0 \times D_0))$ and $(T_0 \cdot
T_1, D_0 \cdot D_1, \psi_1 \comp \psi_0 \cdot (T_1 \times D_1))$.

\paragraph{An initial algebra-final coalgebra construction}

The initial algebra-final coalgebra construction from
Section~\ref{sec:functor-functor} gives a monad-comonad interaction law
if we start with a parameterized monad $T$, a parameterized comonad
$D$ \cite{Uus03} and a family of maps $\psi_{X,Y,Z,W} : T(X, Z) \times D(Y, W) \to
X \times Y + Z \times W$ natural in $X, Y, Z, W$ that agree in the
sense of commutation of the diagrams
\[
\small
\xymatrix@R=0.8pc@C=2pc{
& X \times Y \ar@{=}[r]
  & X \times Y \ar[dd]^{\inl} \\
X \times D (Y,W) \ar[ur]^{\id \times \eps_{Y,W}}  \ar[dr]_{\eta_{X,Z} \times \id} 
& & \\
& T (X,Z) \times D (Y,W) \ar[r]^-{\psi_{X,Y,Z,W}}
  & X \times Y + Z \times W
}
\]
\[
\small
\hspace*{-2.3cm}
\xymatrix@R=0.8pc@C=1pc{
& T (T (X,Z),Z) \times D (D (Y,W),W) \ar[r]^{\psi_{T(X,Z), D(Y,W),Z,W}}
 & T (X,Z) \times D (Y,W) + Z \times W \ar[r]^{\psi_{X,Y,Z,W} + \id}
   & (X \times Y + Z \times W) + Z \times W \ar[dd]^{\id + \inr} \\
\hspace*{2cm} T (T (X,Z),Z) \times D (Y,W) \ar[ur]^-{\id \times \de_{Y,W}} \ar[dr]_-{\mu_{X,Z} \times \id} \hspace*{-2cm}
& & & \\
& T (X,Z) \times D (Y,W) \ar[rr]^-{\psi_{X,Y,Z,W}}
  & & X \times Y + Z \times W
}
\]

We get a monad-comonad interaction law $(T', D', \psi')$ where
$T' X = \mu Z.\, T(X, Z)$, $D' Y = \nu W.\, D(Y, W)$ and $\psi'$ is
defined as in Section~\ref{sec:functor-functor}. The functors $T'$ and
$D'$ carry monad resp.\ comonad structures \cite{Uus03} and the
natural transformation $\psi$ agrees with those.

\paragraph{Free monad-comonad interaction law}

If $\C$ has relevant initial algebras
and final coalgebras, then, given an interaction law $(F, G, \phi)$,
the free monad-comonad interaction law is provided by the free monad
$F^*$ and the cofree comonad $G^\dagger$ and a suitable natural
transformation $\psi'$.

The free monad is given by $F^* X = \mu Z.\, X + F Z$. Its monad
structure is induced by the parameterized monad $T (X, Z) = X + F Z$. 
Similarly, the cofree comonad is given by
$G^\dagger Y = \nu W.\, Y \times G W$. 
Its comonad structure is
induced by the parameterized comonad $D (Y, W) = Y \times G W$. 
In order to
construct $\psi'$ following the construction we described in the
previous paragraph, we need to construct a family of maps
$\psi_{X,Y,Z,W} : (X + F Z) \times (Y \times G W) \to X \times Y + Z \times W$
natural in $X, Y, Z, W$. This is defined as follows:
\begin{eqnarray*}
\psi_{X,Y,Z,W} & = & 
\xymatrix{
(X + F Z) \times (Y \times G W) \ar[r]^-{\ldist} 
& }
\\
& & \quad
\xymatrix@C=4pc{
X \times (Y \times G W) + F Z \times (Y \times G W) 
   \ar[r]^-{\id \times \fst + \id \times \snd}
& }
\\
& & \quad
\xymatrix@C=3pc{
X \times Y + F Z \times G W \ar[r]^-{\id + \phi_{Z, W}}
& X \times Y + Z \times W
}
\end{eqnarray*}

\paragraph{Restricting to fixed $T$ or $D$} 

We denote the categories obtained from $\mcInt(\C)$ by fixing the
monad $T$ or the comonad $D$ by $\mcInt(\C)|_{T,-}$ and
$\mcInt(\C)|_{-,D}$.
The final object of $\mcInt(\C)|_{T,-}$ is $(T, 0, \psi)$ where
$ 
\psi_{X,Y} = 
\xymatrix@C=0.8pc{
T X \times 0 \ar[r]^-{\snd} 
& 0 \ar[r]^-{?} 
  & X \times Y
}
$; 
note that $0$ is the initial comonad.
The initial object of $\mcInt(\C)|_{-,D}$ is $(\Id, D, \psi)$ where 
$ 
\psi_{X,Y} = 
\xymatrix@C=1.5pc{
X \times D Y \ar[r]^-{\id \times \eps_Y} 
& X \times Y
}
$; 
this is because $\Id$ is the initial monad.

\subsection{Monad-comonad interaction in terms of dual and Sweedler dual}
\label{sec:monad-comonad-d-sd}

Similarly to case of functor-functor interaction laws and maps between
them, the dual allows us to obtain useful alternative
characterizations of monad-comonad interaction laws and their
maps. But a complication arises, see below.\footnote{We discuss these
  isomorphisms of categories only on the level of objects here.}

First, let us notice that we have, canonically, a natural
transformation $\ee : \Id \to \d{\Id}$ and, for any $F$, $G$, a
natural transformation $\mm_{F,G} : \d{F} \cdot \d{G} \to \d{(F \cdot
  G)}$. These are informally defined by $\ee_X x = \lambda_Y.\,
\lambda y.\, (x, y) : X \to \int_Y Y \fun (X \times Y)$ and
$(\mm_{F,G})_X f = \lambda_Y.\, \lambda z.\, \letin{(g, w) \leftarrow
  f_{G Y}\, z}{ g_Y\, w} : \int_{Y'} F Y' \fun (\int_{Y''} G Y'' \fun (X
\times Y'')) \times Y' \to \int_Y F (G Y) \fun (X \times Y)$.  The natural transformation $\ee$ is
a natural isomorphism; its inverse $\ee^{-1} : \d{\Id} \to \Id$ is
defined by $\ee^{-1}_X f = \letin{(x, \_) \leftarrow f 1 \zt}{x} :
\int_Y Y \fun (X \times Y) \to X$.

The data $(\ee, \mm)$ satisfy the conditions to make
$\d{(-)} : [\C, \C]\op \to [\C, \C]$ a lax monoidal functor wrt.\ the
$(\Id, {\cdot})$ composition monoidal structure of $[\C, \C]$.

Now, as a first alternative characterization, a monad-comonad interaction
law of $T$ and $D$ is essentially the same as a natural transformation
$\psia: T \to \d{D}$ satisfying
\[
\small
\xymatrix@R=1.5pc{
\Id \ar[d]_{\eta} \ar[r]^{\ee} 
& \d{\Id} \ar[d]^{\d{\eps}} \\
T \ar[r]^{\psia} 
& 
\d{D}
}
\quad
\xymatrix@R=1.5pc{
T \cdot T \ar[d]_{\mu} \ar[r]^-{\psia \cdot \psia}
& 
\d{D} \cdot \d{D} \ar[r]^-{\mm_{D,D}} 
  & \d{(D \cdot D)} \ar[d]^-{\d{\de}}\\
T \ar[rr]^{\psia}
& & \d{D} 
}
\]

Now, since $\d{(-)} : [\C, \C]\op \to [\C, \C]$ is lax monoidal, it
sends monoids in $[\C, \C]\op$ to monoids in $[\C, \C]$, i.e.,
comonads to monads. In particular, it sends the comonad
$(D, \eps, \de)$ to the monad 
$\d{D} = (\d{D}, \d{\eps} \comp \ee, \d{\de} \comp \mm)$. The
conditions above are precisely the conditions for $\psia$ to be a
monad map from $T$ to $\d{D}$. Summing up, a monad-comonad interaction
law of $T$, $D$ amounts to a monad map $\psia : T \to \d{D}$.

As a second alternative, a monad-comonad interaction law of $T$, $D$
is given by a natural transformation $\psib : D \to \d{T}$ satisfying
\begin{equation} \label{eq:cmd-d-map-conds}
\small
\xymatrix@R=1.5pc{
\Id  \ar[r]^{\ee}
& \d{\Id}  \\
D \ar[u]^{\eps} \ar[r]^{\psib}
& \d{T} \ar[u]_{\d{\eta}}
}
\quad
\xymatrix@R=1.5pc{
D \cdot D \ar[r]^-{\psib \cdot \psib }
& \d{T} \cdot \d{T} \ar[r]^-{\mm_{T,T}}
  & \d{(T \cdot T)}   \\
D \ar[u]^{\de} \ar[rr]^{\psib}
& & \d{T} \ar[u]_{\d{\mu}}
}
\end{equation}

Now, unfortunately, $\d{(-)}$ is not oplax monoidal, so it does
generally not send comonoids to comonoids, and $\d{T}$ is generally
not a comonad.  We could define a candidate counit for $\d{T}$ as
$\ee^{-1} \comp \d{\eta} : \d{T} \to \Id$, but there is generally no
candidate for the comultiplication as we cannot invert $\mm_{T,T}$. So
we cannot generally say that a monad-comonad interaction law is a
comonad map from $D$ to $\d{T}$; the functor $\d{T}$ is not a comonad.

But it may be that there exists what one could informally describe as
the greatest comonad smaller (in an appropriate sense) than
$\d{T}$. The formal object of interest here is what we call, following
the use of this word in other contexts \cite{PS16,Por18,HLFV17}, the Sweedler (or finite) dual of
the monad $T$. It is really just the greatest among all comonads
$D$ satisfying conditions (\ref{eq:cmd-d-map-conds}).

We say that the \emph{Sweedler dual} of the
monad $T$ is the (unique up to isomorphism, if it exists) comonad
$\sd{T} = (\sd{T}, \sd{\eta}, \sd{\mu})$ together with a natural
transformation $\iota : \sd{T} \to \d{T}$ such that
\begin{equation} \label{eq:sdcmd-exi}
\small
\xymatrix@R=1.5pc{
\Id  \ar[r]^{\ee}
& \d{\Id}  \\
\sd{T} \ar[u]^{\sd{\eta}} \ar[r]^{\iota}
& \d{T} \ar[u]_{\d{\eta}}
}
\quad
\xymatrix@R=1.5pc{
\sd{T} \cdot \sd{T} \ar[r]^-{\iota \cdot \iota }
& \d{T} \cdot \d{T} \ar[r]^-{\mm_{T,T}}
  & \d{(T \cdot T)}   \\
\sd{T} \ar[u]^{\sd{\mu}} \ar[rr]^{\iota}
& & \d{T} \ar[u]_{\d{\mu}}
}
\end{equation}
and such that, for any comonad $D = (D, \eps, \de)$ and a natural
transformation $\psib$ satisfying
conditions~(\ref{eq:cmd-d-map-conds}), there exists a unique comonad
map $h : D \to \sd{T}$ satisfying $\psi = \iota \comp h$ as summarized
in the following diagrams:
\[
\small
\xymatrix@R=0.5pc{
& \Id  \ar[r]^{\ee}
  & \d{\Id}  \\
\Id \ar@{=}[ur] \\
& \sd{T} \ar[uu]^{\sd{\eta}} \ar[r]^{\iota}
  & \d{T} \ar[uu]_{\d{\eta}} \\
D \ar[uu]^{\eps} \ar@{.>}[ur]^{h} \ar[urr]_{\psib}
}
\quad
\xymatrix@R=0.5pc{
& \sd{T} \cdot \sd{T} \ar[r]^-{\iota \cdot \iota }
  & \d{T} \cdot \d{T} \ar[r]^-{\mm_{T,T}}
    & \d{(T \cdot T)}   \\
D \cdot D \ar@{.>}[ur]^{h \cdot h}\ar[urr]_{\hspace*{5mm}\psib \cdot \psib}   \\
& \sd{T} \ar[uu]^{\sd{\mu}} \ar[rr]^{\iota}
  & & \d{T} \ar[uu]_{\d{\mu}} \\
D \ar[uu]^{\de} \ar@{.>}[ur]^{h} \ar[urrr]_{\psib}
}
\]
The left-hand diagrams of (\ref{eq:sdcmd-exi}) and
(\ref{eq:cmd-d-map-conds}) are secondary in this definition. In the left-hand
diagram of (\ref{eq:sdcmd-exi}), $\sd{\eta}$ is determined by $\iota$
as $\sd{\eta} = \ee^{-1} \comp \d{\eta} \comp \iota$. The left-hand
diagram of (\ref{eq:cmd-d-map-conds}) commutes trivially when
$\psi = \iota \comp h$ for some comonad map $h$.

Now, if $T$ has the Sweedler dual, there is a bijection between
monad-comonad interaction laws of $T$, $D$, i.e, natural
transformations $\psi : D \to \d{T}$ satisfying
(\ref{eq:cmd-d-map-conds}), and comonad maps $h : D \to
\sd{T}$.
Indeed, any natural transformation $\psi$ satisfying
(\ref{eq:cmd-d-map-conds}) induces a unique comonad map $h$ such that
$\iota \comp h = \psi$ by definition of $\sd{T}$. On the other hand,
for a comonad map $h$, we get a natural transformation $\psi$
satisfying (\ref{eq:cmd-d-map-conds}) simply as the composition
$\iota \comp h$. These constructions are inverses.

To sum up, we have proved that the following categories are isomorphic:
\begin{enumerate}[label=(\roman*)]
\item[(o)] monad-comonad interaction laws;
\item triples of a monad $T$, a comonad $D$ and a monad map from $T$
  to $\d{D}$;
\item triples of a monad $T$, a comonad $D$ and a natural transformation from
  $D$ to $\d{T}$ subject to conditions (\ref{eq:cmd-d-map-conds});
\item triples of a monad $T$, a comonad $D$ and a comonad map from $D$
  to $\sd{T}$.
\end{enumerate}

We see that the initial object of $\mcInt(\C)|_{-,D}$ is
$(\Id, D, \ldots)$ while the final object is $(\d{D}, D, \ldots)$. The
initial object of $\mcInt(\C)|_{T,-}$ is $(T, \sd{T}, \ldots)$ while
the final object is $(T, 0, \ldots)$.

\medskip


Calculating the Sweedler dual is a complicated matter and we will come
to it in Section~\ref{sec:monoid-comonoid}. But here are two examples
where the dual of the underlying functor of a monad is not a comonad
and the underlying functor of the Sweedler dual differs from the dual.

\begin{example} \label{ex:nelists-c}
  In Example~\ref{ex:nelists-a}, we saw that the dual of 
  the functor $T X = X^+$ (nonempty lists) was
  $\d{T} Y \cong \prod n : \Nat.\, [0..n] \times Y$. While the functor
  $T$ is a monad (the free semigroup delivering monad), its dual
  $\d{T}$ is not a comonad. The Sweedler dual is
  $\sd{T} Y = Y \times (Y + Y)$, $\sd{\eta}\, (y, \_) = y$,
  $\sd{\de}\, (y, \inl\, y') = ((y, \inl\, y'), \inl\, (y', \inl\, y'))$,
  $\sd{\de}\, (y, \inr\, y') = ((y, \inr\, y'), \inr\, (y', \inr\, y'))$,
  with 
  $\iota_Y : \sd{T} Y \to \d{T} Y$ defined by \linebreak
  $\iota\, (y, \_)\, 0 = (0, y)$, 
  $\iota\, (\_, \inl\, y')\, (n+1) = (0, y')$, 
  $\iota\, (\_, \inr\, y')\, (n+1) = (n+1, y')$.
  The monad-comonad \linebreak interaction law 
  $\psi_{X,Y} : T X \times \sd{T} Y \to X \times Y$ 
  is defined by 
  $\psi\, ([x_0], (y, \_)) = (x_0, y)$, \linebreak 
  $\psi\, ([x_0, \ldots, x_{n+1}], (\_, \inl\, y')) = (x_0, y')$,
  $\psi\, ([x_0, \ldots, x_{n+1}], (\_, \inr\, y')) = (x_{n+1}, y')$.
\end{example}

\begin{example} \label{ex:update-c} We learned in
  Example~\ref{ex:update-a} that the dual of the functor
  $T X = A \fun (B \times X)$ is
  $\d{T} Y = (A \fun B) \fun (A \times Y)$. But the Sweedler dual of
  $T$ as a monad when $B$ is a monoid acting on $A$ is
  $\sd{T} Y = A \times (B \fun Y)$,
  $\iota\, (a, f) = \lambda g.\, (a, f\, (g\, a))$. 
  In
  Example~\ref{ex:update-b}, we showed the monad-comonad interaction law
  of $T$ and $\sd{T}$.
\end{example}

\section{Stateful running}
\label{sec:running}

Monad-comonad interaction laws are related to stateful runners as
introduced by Uustalu~\cite{Uus15}. Next we present the basic facts
about runners using the Sweedler dual and then explain the connection
to monad-comonad interaction laws.

\subsection{Runners}


A runner is similar to a monad-comonad interaction law but the allowed
machine behaviors are restricted to operate on a fixed state set and
their dynamics is also fixed (in the sense that, for any prospective
initial state, there is a behavior pre-determined). Only the initial
state is not fixed. The state set is manifest but the notion of
machine behavior and the pre-determined dynamics are coalesced with
the interaction protocol into the natural transformation that is the
runner. The runner is a polymorphic function sending any allowed
computation and initial state into a return value and a final state.


\medskip

Given a monad $T = (T, \eta, \mu)$ on $\C$, we call a 
\emph{(stateful) runner} of $T$ an object $Y$ with a family $\theta$ of
maps
\[
\theta_X : T X \times Y \to X \times Y
\]
natural in $X$, satisfying 
\[
\small
\xymatrix@R=1.5pc{
X \times Y \ar@{=}[r]  \ar[d]_{\eta_X \times \id}
& X \times Y \ar@{=}[d] \\
T X \times Y \ar[r]^{\theta_X} 
& X \times Y
}
\quad
\xymatrix@R=1.5pc{
T T X \times Y \ar[d]_{\mu_X \times \id} \ar[r]^{\theta_{T X}}
& T X \times Y \ar[r]^{\theta_X}
  & X \times Y \ar@{=}[d] \\
T X \times Y \ar[rr]^{\theta_X}
& & X \times Y
}
\]

\begin{example}
  We revisit Example~\ref{ex:update-a} about the update monad
  $T X = A \fun (B \times X)$ defined by an action
  ${\downarrow} : A \times B \to A$ of a monoid $B$ on an object
  $A$. An \emph{update lens} \cite{AU14updlens} is an object $Y$
  together with maps $lkp : Y \to A$, $upd : Y \times B \to Y$ such
  that $lkp$ is a map between the $B$-sets $(Y, upd)$ and
  $(A, {\downarrow})$. Any update lens gives us a runner of $T$ via
  $\theta_X : (A \fun (B \times X)) \times Y \to X \times Y$ defined
  by
  $\theta\, (f, y) = \letin{(b, x) \leftarrow f\, (lkp\, y)}{(x, upd\,
    (y, b))}$.
  In fact, runners of this monad are in a bijection with update
  lenses and those in turn are essentially the same as coalgebras for
  the comonad $D Y = A \times (B \fun Y)$.
\end{example}

A \emph{runner map} between $(Y, \theta)$, $(Y', \theta')$ is a map $f
: Y \to Y'$ satisfying $(\id_X \times f) \circ \theta_X = \theta'_{X}
\circ (\id_{T X} \times f)$. 
%
Runners and their maps form a category $\Run(T)$.

Like monad-comonad interaction laws and maps between them, runners and
maps between them admit a number of alternative
characterizations.

The first one is that runners of $T$ are essentially the
same as objects $Y$ endowed with a monad map
$\vartheta : T \to \St^Y$ where $\St^Y = (\St^Y, \eta^Y, \mu^Y)$ is
the state monad for $Y$ whose underlying functor is defined by
$\St^Y X = Y \fun (X \times Y)$.  This is via the bijection of natural
transformations
\[
\mbox{$\int_X$} \C(T X \times Y, X \times Y) \cong \mbox{$\int_X$} \C(T X, \underbrace{Y \fun (X \times Y)}_{\St^Y X})
\]
Under this bijection, the runner conditions amount to the monad
map conditions
\[
\small
\xymatrix@R=1.5pc{
X \ar@{=}[r]  \ar[d]_{\eta_X}
& X \ar[d]^{\eta^Y_X} \\
T X \ar[r]^{\vartheta_X} 
& \St^Y X
}
\quad
\xymatrix@R=1.5pc{
T T X \ar[d]_{\mu_X} \ar[r]^{\vartheta_{T X}}
& \St^Y T X \ar[r]^{\St^Y \vartheta_X}
  & \St^Y \St^Y X \ar[d]^{\mu^Y_X} \\
T X \ar[rr]^{\vartheta_X}
& & \St^Y X
}
\]
A map $f : Y \to Y'$ is a runner map between $(Y, \vartheta)$,
$(Y', \vartheta')$ iff
\[
\small
\xymatrix@R=0.5pc{
T X  \ar[r]^-{\vartheta_X} \ar@{=}[dd]
& Y \fun X \times Y \ar[dr]^{\id \fun \id \times f} \\
& & Y \fun X \times Y' \\
T X \ar[r]^-{\vartheta'_X} 
& Y' \fun X \times Y' \ar[ur]_{f \fun \id}
}
\]

Second, a runner of the monad $T$ is also essentially the same thing
as a coalgebra $(Y, \gamma)$ of the functor $\d{T}$ satisfying the
conditions
\begin{equation} \label{eq:d-coalg-conds}
\small
\xymatrix@R=1.5pc{
Y \ar[r]^{\ee_Y}
& \d{\Id} Y \\
Y \ar@{=}[u] \ar[r]^{\gamma}
& \d{T} Y \ar[u]_{\d{\eta}_Y}
}
\quad
\xymatrix@R=1.5pc{
Y \ar[r]^{\gamma}
& \d{T} Y \ar[r]^{\d{T} \gamma}
  & \d{T} \d{T} Y \ar[r]^{(\mm_{T,T})_Y}
    & \d{(T \cdot T)} Y \\
Y \ar@{=}[u] \ar[rrr]^{\gamma}
& & & \d{T} Y \ar[u]_{\d{\mu}_Y}
}
\end{equation}
This is because of the bijection
\[
\mbox{$\int_X$} \C(T X \times Y, X \times Y) 
\cong \mbox{$\int_X$} \C(Y \times T X, Y \times X)
\cong \C(Y, \underbrace{\mbox{$\int_X$} T X \fun (Y\times X)}_{\d{T} Y})
\]
A runner map between
$(Y, \gamma)$, $(Y', \gamma')$ is a coalgebra map, i.e., a map
$f : Y \to Y'$ such that
\[
\small
\xymatrix@R=1.5pc{
Y \ar[d]_{f} \ar[r]^{\gamma}
& \d{T} Y \ar[d]^{f} \\
Y' \ar[r]^{\gamma'}
& \d{T} Y'
}
\]
%
  Recall that the functor $\d{T}$ is generally not a comonad as
  $\mm_{T,T}$ is not invertible, so we cannot generally speak of
  functor coalgebras satisfying conditions (\ref{eq:d-coalg-conds})  as comonad coalgebras.
%

Lastly, recall that the costate comonad for an object $Y$ is defined
by $\Cost^Y = (\Cost^Y, \eps^Y, \de^Y)$ is defined by 
$\Cost^Y Z = (Y \fun Z) \times Y$, 
$\eps^Y (f, y) = f\ y$, $\de^Y (f, y) = (\lambda
y'.\, (f, y'), y)$.
This gives us a third characterization: a runner is essentially the
same as an object $Y$ together with a natural transformation $\zeta$
between the underlying functor of the costate comonad $\Cost^Y$ and
the functor $\d{T}$ satisfying
\begin{equation} \label{eq:cost-d-map-conds}
\small
\xymatrix@R=1.5pc{
\Id \ar[r]^{\ee}
& \d{\Id} \\
\Cost^Y \ar[u]^{\eps^Y} \ar[r]^{\zeta}
& \d{T} \ar[u]_{\d{\eta}}
}
\quad
\xymatrix@R=1.5pc{
\Cost^Y \cdot \Cost^Y \ar[r]^-{\zeta \cdot \zeta}
& \d{T} \cdot \d{T} \ar[r]^-{\mm_{T,T}}
  & \d{(T \cdot T)} \\
\Cost^Y \ar[u]^{\de^Y} \ar[rr]^{\zeta}
& & \d{T}  \ar[u]_{\d{\mu}}
}
\end{equation}
This is because of the bijection
\[
\C(Y, \d{T} Y) 
\cong^{}\footnote{As $\d{T}$ 
is necessarily strong, we can apply an internal version of 
the Yoneda lemma.}
 \mbox{$\int_Z$} \C(Y \fun Z, Y \fun \d{T} Z)
\cong \mbox{$\int_Z$} \C(\underbrace{(Y \fun Z) \times Y}_{\Cost^Y Z}, \d{T} Z)
\] 
A runner map between $(Y, \zeta)$, $(Y', \zeta')$ is a
map $f : Y \to Y'$ satisfying
\[
\small
  \xymatrix@R=0.5pc{
    & (Y \fun Z) \times Y \ar[r]^-{\zeta'_Z}  & \d{T} Z \ar@{=}[dd] \\
  (Y' \fun Z) \times Y \ar[rd]_-{\id \times f} \ar[ru]^-{(f \fun \id) \times \id} & & \\
    & (Y' \fun Z) \times Y' \ar[r]_-{\zeta_Z} & \d{T} Z
  }
\]

If the Sweedler dual comonad $\sd{T}$ of the monad $T$ exists, then we
can continue this reasoning. We see that a runner is the essentially
the same as an object $Y$ with a comonad morphism between $\Cost^Y$
and $\sd{T}$ and that is further essentially the same as an object $Y$
with a comonad coalgebra of $\sd{T}$.

Summing up, we have established that the following categories are
isomorphic:
\begin{enumerate}[label=(\roman*)]
\item[(o)] runners of $T$;
\item objects $Y$ with a monad map from $T$ to $\St^Y$;
\item functor coalgebras of $\d{T}$ subject to conditions (\ref{eq:d-coalg-conds});
\item objects $Y$ with a natural transformation from $\Cost^Y$ to $\d{T}$ subject to conditions (\ref{eq:cost-d-map-conds});
\item objects $Y$ with a comonad map from $\Cost^Y$ to $\sd{T}$;
\item comonad coalgebras of $\sd{T}$.
\end{enumerate}

\subsection{Runners vs.\ monad-comonad interaction laws}

Monad-comonad interaction laws of $T$, $D$ are in a bijection with
\emph{$D$-coalgebraic $T$-runner specs} by which we mean
carrier-preserving functors between $\Coalg(D)$ and $\Run(T)$, i.e.,
functors $\Psi : \Coalg(D) \to \Run(T)$ such that
\[
\small
\xymatrix@R=1.5pc{
\Coalg(D) \ar[dr]_{U} \ar[rr]^{\Psi}
& & \Run(T) \ar[dl]^{U} \\
& \C
}
\]

Indeed, given a monad-comonad interaction law $\psi$, we can define a
runner spec $\Psi$ by
\[
\small
(\Psi\, (Y, \gamma))_X = (Y, 
\xymatrix{
T X \times Y \ar[r]^-{\id \times \gamma}
& T X \times D Y \ar[r]^-{\psi_{X,Y}}
  & X \times Y 
} )
\]
In the opposite direction, given a runner spec $\Psi$, we build a
interaction law from the cofree coalgebras of $D$. For any $Y$, we
have the cofree coalgebra $(D Y, \de_Y)$ and define a monad-comonad
interaction law $\phi$ by
\[
\small
\phi_{X,Y} =
\xymatrix@C=3.5pc{
T X \times D Y \ar[r]^{\Psi (DY,\de_Y)_X}
& X \times D Y \ar[r]^{\id \times \eps_Y} 
  & X \times Y
}
\]

A pair of a monad map $f : T \to T'$ and a comonad map $g : D' \to D$
is an interaction law map between $(T, D, \psi)$ and $(T', D', \psi')$
iff the corresponding coalgebraic runner specs
satisfy
\[
\small
\xymatrix@R=1.5pc{
\Coalg(D) \ar[r]^{\Psi}
& \Run(T)  \\
\Coalg(D')  \ar[u]^{\Coalg(g)} \ar[r]^{\Psi'}
& \Run(T') \ar[u]_{\Run(f)}
}
\]
(Notice that $\Coalg(-) : \Comnd(\C) \to \CAT$ and
$\Run(-) : (\Mnd(\C))\op \to \CAT$.) So the categories of
monad-comonad interaction laws and coalgebraic runner specs are
isomorphic.

More modularly, but assuming that all Sweedler duals exist, the
isomorphism of the categories of monad-comonad interaction laws and
coalgebraic runner specs follows from the following sequence of
isomorphisms of categories, using that $\Run(T) \cong \Coalg(\sd{T})$:
\begin{enumerate}[label=(\roman*)]
\item[(o)] monad-comonad interaction laws;
\item triples of a monad $T$, a comonad $D$ and a comonad map
  between $D$, $\sd{T}$;
\item triples of a monad $T$, a comonad $D$ and a carrier-preserving
  functor between $\Coalg(D)$, $\Coalg(\sd{T})$;
\item coalgebraic runner specs.
\end{enumerate}

\section{Residual interaction and running}
\label{sec:residual}

We will now generalize interaction laws to allow that that not all of
the effect of a computation is serviced by a machine behavior in an
interaction.

\subsection{Residual interaction}

Given a monad $R = (R, \eta^R, \mu^R)$ on our base category $\C$. We
can generalize functor-functor and monad-comonad interaction laws as
follows.

An \emph{$R$-residual functor-functor interaction law} is given by
endofunctors $F$, $G$ on $\C$ together with a family of maps 
\[
\phi_{X,Y} : F X \times G Y \to R (X \times Y)
\] 
natural in $X$, $Y$.

An \emph{$R$-residual interaction law map} between $(F, G, \phi)$,
$(F', G', \phi')$ is given by natural transformations $f : F \to F'$,
$g : G' \to G$ such that
\[
\small
\xymatrix@R=0.4pc{
& F X \times G Y \ar[r]^-{\phi_{X,Y}}
  & R(X \times Y) \ar@{=}[dd] \\
F X \times G' Y \ar[ur]^{\id \times g_Y}  \ar[dr]_{f_X \times \id} 
& & \\
& F' X \times G' Y \ar[r]^-{\phi'_{X,Y}}
  & R(X \times Y)
}
\]

$R$-residual functor-functor interaction laws form a category
$\Int(\C, R)$.

This category is monoidal.
The tensorial unit is $(\Id, \Id, \eta^R \cdot (\Id \times \Id))$.
The tensor of $(F, G, \phi)$ and $(J, K, \psi)$ is $(F \cdot J, G
\cdot K, \mu^R \comp R \cdot \psi \comp \phi \cdot (J \times K))$.

An $R$-\emph{residual monad-comonad interaction law}
of a monad $T$ and a comonad $D$ is a family $\psi$ of maps
\[
\psi_{X,Y} : T X \times D Y \to R (X \times Y)
\]
natural in $X$ and $Y$, satisfying
\[
\small
\xymatrix@R=0.4pc{
& X \times Y \ar@{=}[r]
  & X \times Y \ar[dd]^{\eta^R_{X \times Y}} \\
X \times D Y \ar[ur]^{\id \times \eps_Y}  \ar[dr]_{\eta_X \times \id} 
& & \\
& T X \times D Y \ar[r]^-{\psi_{X,Y}}
  & R (X \times Y) 
}
\]
\[
\xymatrix@R=0.4pc{
& T T X \times D D Y \ar[r]^{\psi_{TX, DY}}
 & R (T X \times D Y) \ar[r]^{R \psi_{X,Y}}
   & R R (X \times Y) \ar[dd]^{\mu^R_{X \times Y}} \\
T T X \times D Y \ar[ur]^{\id \times \de_Y} \ar[dr]_{\mu_X \times \id} 
& & & \\
& T X \times D Y \ar[rr]^-{\psi_{X,Y}}
  & & R (X \times Y)
}
\]

\begin{example}
  Let $R X = X + E$ (the exceptions monad). Take
  $T X = A \fun (X + E)$,  
  $D Y = A \times Y$; these are a monad and a comonad. The
  natural transformation
  $\psi\, (f, (a, y)) = \mathsf{case~} f\, a \mathsf{~of~} (\inl\, x
  \mapsto (\inl\, x, y) \mid \inr\, e \mapsto \inr\, e)$
  satisfies the conditions of a $R$-residual monad-comonad interaction
  law.

\end{example}

$R$-residual monad-comonad interaction laws are the same as monoid
objects in the monoidal category $\Int(\C, R)$. 

$R$-\emph{residual monad-comonad interaction law maps} are defined as
expected and correspond to monoid morphisms.

The category $\mcInt(\C, R)$ of $R$-residual monad-comonad interaction
laws is isomorphic to \linebreak $\Mon(\Int(\C, R))$.

\subsection{Relationship to interaction laws on Kleisli
  categories}

It is tempting to guess that an $R$-residual functor-functor
interaction law of $F$, $G$ would be the same thing as a
functor-functor interaction law on the Kleisli category of $R$. But
this is jumping to conclusions too hastily. For something like this to
be feasible, we need, first of all, that $F$, $G$ lift to $\Kl(R)$. A
necessary and sufficient condition is the presence of distributive
laws of $F$ and $G$ over $R$, i.e., natural transformations
$\kappa : F \cdot R \to R \cdot F$ and
$\lambda : G \cdot R \to R \cdot G$ agreeing with 
 the monad structure of
$R$. Then we define the lifted versions of $F$, $G$ on objects by
$\bar{F} X = F X$, $\bar{G} Y = G Y$; for maps $k : X \to R X'$,
$\ell : Y \to R Y'$, we define
$\bar{F} k = \kappa_{X'} \comp F k : F X \to R F X'$ and
$\bar{G} \ell = \lambda_{Y'} \comp G \ell : G Y \to R G Y'$.

Moreover, we also need to lift $\times$ to $\Kl(R)$ as a bifunctor and
monoidal structure. For this, a necessary and sufficient condition is
monoidality of $R$ as a monad, i.e., the presence of a family of maps
$m_{X,Y} : R X \times R Y \to R (X \times Y)$ 
natural in $X$, $Y$
agreeing with both the product monoidal structure of $\C$ and the
monad structure of $R$.  (This is the same as $R$ being commutative
strong monad.) For objects, we then define
$X \mathbin{\bar{\times}} Y = X \times Y$, and for maps
$k : X \to R X'$, $\ell : Y \to R Y'$, we define
$k \mathbin{\bar{\times}} \ell = m_{X',Y'} \comp (k \times \ell) : X \times Y \to
R (X' \times Y')$.

The naturality condition for
$\phi_{X,Y} : F X \times G Y \to R (X \times Y)$ as an interaction law
of $\bar{F}$, $\bar{G}$ is: for all $k$, $\ell$,
\[
\small
\xymatrix@R=1.5pc{
F X \times G Y \ar[rr]^{\phi_{X, Y}} \ar[d]_{F k \times G \ell}
& & R (X \times Y) \ar[d]^{R (k \times \ell)} \\
F R X' \times G R Y' \ar[d]_{\kappa_{X'} \times \lambda_{Y'}}
& & R (R X' \times R Y') \ar[d]^{R m^R_{X',Y'}}   \\
R F X' \times R G Y' \ar[d]_{m^R_{FX', GY'}}
& & R R (X' \times Y') \ar[d]^{\mu^R_{X' \times Y'}} \\
R (F X' \times G Y') \ar[r]^{R \phi_{X',Y'}} 
& R R (X' \times Y') \ar[r]^{\mu^R_{X' \times Y'}}
  & R (X' \times Y')
}
\]

But the naturality condition for $\phi$ as an
$R$-residual interaction law of $F$, $G$ is: for all $f$, $g$,
\[
\small
\xymatrix@R=1.5pc{
F X \times G Y \ar[r]^{\phi_{X,Y}} \ar[d]_{F f \times G g}
& R (X \times Y) \ar[d]^{R (f \times g)} \\
F X' \times G Y' \ar[r]^{\phi_{X',Y'}} 
& R (X' \times Y') \\
}
\]

The first condition implies the second:
\[
\small
\xymatrix@C=3pc{
F X \times G Y \ar[rr]^{\phi_{X, Y}} \ar[d]_{F f \times G g}
& & R (X \times Y) \ar[d]^{R (f \times g)}  \\
F X' \times G Y' \ar[d]^{F \eta^R_{X'} \times G \eta^R_{Y'}}
     \ar@/_3pc/[dd]_{\eta^R_{FX'} \times \eta^R_{GY'}}
     \ar@/_5pc/[ddd]_{\eta^R_{FX' \times GY'}}
     \ar@/_8pc/[dddd]_{\phi_{X',Y'}}
& & R (X \times Y) \ar[d]_{R (\eta^R_{X'} \times \eta^R_{Y'})} 
     \ar@/^3pc/[dd]^{R\eta^R_{X' \times Y'}}
     \ar@{=}@/^5pc/[ddd]
\\
F R X' \times G R Y' \ar[d]^{\kappa_{X'} \times \lambda_{Y'}}
& & R (R X' \times R Y') \ar[d]_{R m^R_{X',Y'}}   \\
R F X' \times R G Y' \ar[d]^{m^R_{FX', GY'}}
& & R R (X' \times Y') \ar[d]_{\mu^R_{X' \times Y'}} \\
R (F X' \times G Y') \ar[r]^{R \phi_{X',Y'}} 
& R R (X' \times Y') \ar[r]^{\mu^R_{X' \times Y'}}
  & R (X' \times Y') \\
R (X' \times Y') \ar[ur]^{\eta^R_{R (X' \times Y')}}
  \ar@{=}[urr]
}
\]

The second condition gives the first condition restricted to pure maps
of $\Kl(R)$ (maps in the image of the left adjoint $J$ the Kleisli
adjunction of $R$), i.e., for maps $k$, $\ell$ of the form
$k = J f = \eta_{X'} \comp f$, $\ell = J g = \eta_{Y'} \comp g$:
\[
\small
\xymatrix@C=3pc{
F X \times G Y \ar[rr]^{\phi_{X, Y}} \ar[d]_{F f \times G g}
                       \ar@/_5pc/[dd]_{F k \times G \ell}
& & R (X \times Y) \ar[d]^{R (f \times g)} \ar@/^5pc/[dd]^{R (k \times \ell)}  \\
F X' \times G Y' \ar[rr]^{\phi_{X', Y'}} \ar[d]_{F \eta^R_{X'} \times G \eta^R_{Y'}}
     \ar@/^3pc/[dd]^{\eta^R_{FX'} \times \eta^R_{GY'}}
     \ar@/^5pc/[ddd]^{\eta^R_{FX' \times GY'}}
& & R (X \times Y) \ar[d]^{R (\eta^R_{X'} \times \eta^R_{Y'})} 
     \ar@/_3pc/[dd]_{R\eta^R_{X' \times Y'}}
     \ar@{=}@/_5pc/[ddd]
     \ar@/_4pc/[dddl]_{\eta^R_{R(X' \times Y')}}
\\
F R X' \times G R Y' \ar[d]_{\kappa_{X'} \times \lambda_{Y'}}
& & R (R X' \times R Y') \ar[d]^{R m^R_{X',Y'}}   \\
R F X' \times R G Y' \ar[d]_{m^R_{FX', GY'}}
& & R R (X' \times Y') \ar[d]^{\mu^R_{X' \times Y'}} \\
R (F X' \times G Y') \ar[r]^{R \phi_{X',Y'}} 
& R R (X' \times Y') \ar[r]^{\mu^R_{X' \times Y'}}
  & R (X' \times Y')
}
\]

We thus see that $R$-residual functor-functor interaction laws are
more liberal than functor-functor interaction laws in $\Kl(R)$ in that
we do not need the distributive laws and monoidality of $R$ and that
the naturality condition is weaker (only required for pure maps).

\subsection{Residual stateful running}

Similarly to interaction laws, the concept of runners can also be
generalized.

Given a monad $R = (R, \eta^R, \mu^R)$ on $\C$.  An \emph{$R$-residual
  runner} of a monad $T = (T, \eta, \mu)$ on $\C$ is an an object $Y$
with a family $\theta$ of maps
\[
\theta_X : T X \times Y \to R (X \times Y)
\]
natural in $X$, satisfying
\[
\small
\xymatrix@R=1.5pc{
X \times Y \ar@{=}[r]  \ar[d]_{\eta_X \times \id}
& X \times Y \ar[d]^{\eta^R_{X \times Y}} \\
T X \times Y \ar[r]^{\theta_X} 
& R (X \times Y)
}
\xymatrix@R=1.5pc{
T T X \times Y \ar[d]_{\mu_X \times \id} \ar[r]^{\theta_{T X}}
& R(T X \times Y) \ar[r]^{R\theta_X}
  & RR (X \times Y) \ar[d]^{\mu^R_{X \times Y}} \\
T X \times Y \ar[rr]^{\theta_X}
& & R (X \times Y)
}
\]

A \emph{map of $R$-residual runners} of $T$ between $(Y, \theta)$,
$(Y', \theta')$ is a map $f : Y \to Y'$ satisfying
\[
\small
\xymatrix@R=1.5pc{
T X \times Y \ar[r]^{\theta_X} \ar[d]_{\id \times f}
& R (X \times Y) \ar[d]^{R (\id \times f)} \\
T X \times Y' \ar[r]^{\theta'_X} 
& R (X \times Y')
}
\]

$R$-residual runners of $T$ form a category $\Run(T, R)$.

$R$-residual runners of $T$ are essentially the same as objects $Y$
endowed with a monad map $\vartheta : T \to \St^{R,Y}$ where
$\St^{R,Y} = (\St^{R,Y}, \eta^{R,Y}, \mu^{R,Y})$ is the
$R$-transformed state monad for $Y$ whose underlying functor is
defined by $\St^{R,Y} X = Y \fun R(X \times Y)$.

This is via the bijection of natural transformations
\[
\mbox{$\int_X$} \C(T X \times Y, R(X \times Y))
\cong \mbox{$\int_X$} \C(T X, \underbrace{Y \fun R(X \times Y)}_{\St^{R,Y} X})
\]
Under this bijection, the $R$-residual runner conditions amount to the monad
map conditions
\[
\small
\xymatrix@R=1.5pc@C=2pc{
X \ar@{=}[r]  \ar[d]_{\eta_X}
& X \ar[d]^{\eta^{R,Y}_X} \\
T X \ar[r]^-{\vartheta_X} 
& \St^{R,Y} X
}
\quad
\xymatrix@R=1.5pc@C=2pc{
T T X \ar[d]_{\mu_X} \ar[r]^-{\vartheta_{T X}}
& \St^{R,Y} T X \ar[r]^-{\St^{R,Y} \vartheta_X}
  & \St^{R,Y} \St^{R,Y} X \ar[d]^{\mu^{R,Y}_X} \\
T X \ar[rr]^-{\vartheta_X}
& & \St^{R,Y} X
}
\]

A map $f : Y \to Y'$ is a map of $R$-residual runners of $T$ between
$(Y, \vartheta)$, $(Y', \vartheta')$ iff
\[
\small
\xymatrix@R=0.4pc{
T X  \ar[r]^-{\vartheta_X} \ar@{=}[dd]
& Y \fun R(X \times Y) \ar[dr]^{\id \fun R(\id \times f)} \\
& & Y \fun R(X \times Y') \\
T X \ar[r]^-{\vartheta'_X} 
& Y' \fun R(X \times Y') \ar[ur]_{f \fun \id}
}
\]
So the categories of $R$-residual runners of $T$ and objects $Y$ equipped
with a monad map from $T$ to $\St^{R,Y}$ are isomorphic.

\section{Monoid-comonoid interaction}
\label{sec:monoid-comonoid}

Exploiting that monads and monad-like objects like arrows or lax
monoidal functors (``applicative functors'') are monoids has turned
out to be very fruitful in categorical semantics (see, e.g.,
\cite{JM10,CK14,RJ17}). We now explore this perspective by
abstracting monad-comonad interaction laws into monoid-comonoid
interaction laws. This leads us to further known concepts and methods
from category theory.

\subsection{Interaction laws and Chu spaces}

The first step in generalizing interaction laws to monoids and comonoids
is to account for interaction laws as maps in a category. Recall
that the Day convolution~\cite{Da:1970} of functors
$F, G : \C \to \C$ where $\C$ is a category with finite products is given by
\[
(F \star G) Z = \mbox{$\int^{X, Y}$} \C(X \times Y, Z) \bullet (F X \times G Y)
\]
provided that this coend exists. (We take the same 
stance toward the question of well-definedness of the Day convolution
as we took toward the well-definedness of the dual in Section~\ref{sec:functor-functor}.) 
By reasoning about natural transformations, we see that interaction laws
for a pair of functors $F$ and $G$ amount to maps $\phi : F
\star G \to \Id_\C$:
\[
\mbox{$\int_{X,Y}$} \C(F X \times G Y, X \times Y)
\cong \mbox{$\int_{X,Y,Z}$} \Set(\C(X \times Y, Z), \C(F X\times G Y, Z))
\cong \mbox{$\int_Z$} \C((F \star G) Z, Z) 
\]
%
%
%
We see that a functor-functor interaction law is a triple $\left(F, G,
\phi : F \star G \to \Id_\C\right)$, i.e., a Chu space~\cite{Bar06} over the monoid object $\Id_\C$ wrt. the
Day convolution monoidal structure on $[\C,\C]$. An
interaction law map is a Chu space map under this view, 
so the category $\Int(\C)$ is
isomorphic to the category $\mathbf{Chu}([\C,\C], \Id_\C)$.

This is nice, but not fine-grained enough for developing an
abstract foundation for our theory. The canonical monoidal
structure on $\mathbf{Chu}(\F, R)$ (where $R$ is a monoid object in $\F$) 
is based on the monoidal structure
of the base category $\F$, which in our case is the Day convolution, and
uses pullbacks. But we are interested in a different monoidal
structure on $\Int(\C)$ that is based on composition and gives us
monads and comonads as monoids resp.\ comonoids. We fix this mismatch by moving to
one of the cousins of the Chu construction: glueing \`a la Hasegawa.

\subsection{Interaction laws and Hasegawa's glueing}

Hasegawa's glueing construction \cite{Has99} works as follows. Given
two monoidal categories $\F = (\F, I^\F, \otimes^F)$, $\G = (\G, I^\G,
\otimes^\G)$ and a lax monoidal functor $(\d{(-)}, \ee, \mm) : \G \to
\F$. The comma category $\F \downarrow \d{(-)}$
carries a monoidal structure given by:
\[
\begin{array}{c}
I = (I^\F, I^\G, 
\xymatrix@C=1pc{
I^\F \ar[r]^-{\ee} 
& \d{(I^\G)}
}
) \\
(F, G, \phi) \otimes (F', G', \phi') = 
(F \otimes^\F F', G \otimes^\G G', 
\xymatrix@C=1.5pc{F \otimes^\F F' \ar[r]^-{\phi \otimes^\F\! \phi'}
& \d{G} \otimes^\F \d{G'} \ar[r]^-{\mm_{G,G'}}
  & \d{(G \otimes^\G G')}    
}
)
\end{array}
\]
Also, if $\F$ and $\G$ are closed and $\G$ has pullbacks, then
$\F \downarrow \d{(-)}$ is closed.

An interesting case of this construction is when we start with a
duoidal category $(\F, I, \otimes, J, \star)$ closed
wrt.\ $\star$~\cite{AM:2010,GLF16}. This is a category with two monoidal
structures, and among its data are a map $\chi : I \star I \to I$ and
a family of maps $\xi_{F,F',G,G'} : (F \otimes F') \star (G \otimes
G') \to (F \star G) \otimes (F' \star G')$ natural in $F, F', G,
G'$. Moreover, given a monoid $(R, \eta^R, \mu^R)$ in $(\F, I,
\otimes)$, we define $\d{(-)} : \F^{\op} \to \F$ by $\d{G} = G
\lollistar R$. This functor $\d{(-)}$ is lax monoidal wrt. the 
$(I, \otimes)$ monoidal structure, since as witnesses $\ee, \mm$ of
lax monoidality we have the curryings of 
\[
\begin{array}{c}
\ee' = 
\xymatrix{
I \star I \ar[r]^{\chi} 
& I \ar[r]^{\eta^R}
  & R
}
\\
\mm'_{G,G'} = 
\xymatrix{
(\d{G} \otimes \d{G'}) \star (G \otimes G') \ar[r]^-{\xi}
& (\d{G} \star G) \otimes (\d{G'} \star G') \ar[r]^-{\ev \otimes \ev}
  & R \otimes R \ar[r]^-{\mu^R}
    & R
}
\end{array}
\]
Since $\F$, $\F^{\op}$, $\d{(-)}$ fulfill the assumptions of the
glueing construction, $\F \dn \d{(-)}$ is monoidal. A monoid in this
category is what we will recognize as a $R$-residual monoid-comonoid interaction
law in the duoidal category $\F$.

To recover the usual functor-functor and monad-comonad interaction
laws, we take $(\F, I, \otimes, J, \star, \lollistar)$
to be $[\C, \C]$ with its composition monoidal and Day convolution
monoidal closed structures, and
define $\d{G} = G \lollistar \Id$. An object of $\F \dn \d{(-)}$ is a
functor-functor interaction law while a monad-comonad interaction law
is a monoid object of this category. We ignore the issue that $\star$
and $\lollistar$ need not be well-defined everywhere on $[\C,\C]$.  As
we remarked before, this can be solved by restricting to a full
subcategory of $[\C, \C]$ given by some class of functors that is
closed under $\star$ and $\lollistar$ (such as finitary
functors, cf.~\cite{GLF16}).


The notions of dual and Sweedler dual emerge as follows in this
setting. When the $\star$ monoidal structure is symmetric, we also
have 
\[
\xymatrix@C=3pc{
\F\op \ar@/^0.6pc/[r]^{\d{(-)}}  \ar@{}[r]|{\top}
& \F \ar@/^0.6pc/[l]^{{\d{(-)}}\op}
}
\]
since, for any $F, G \in |[\C, \C]|$,
\[
\F(F, \d{G}) \cong \F(F \star G, \Id) \cong \F(G \star F, \Id) \cong \F(G,  \d{F}) \cong \F\op (\d{F},  G).
\]
Because of this adjunction, we call $\d{(-)}$ the \emph{dual}.

Since the functor $\d{(-)}$ is lax monoidal, it lifts to a functor
between the respective categories of monoids (bear in mind that
$\Mon(\F\op) = (\Comon(\F))\op$):
\begin{equation} \label{eq:adj-mon}
\xymatrix@R=1.5pc{
(\Comon(\F))\op \ar[r]^-{\d{(-)}} \ar[d]_{U}
& \Mon(\F) \ar[d]^{U} \\
\F\op \ar[r]^-{\d{(-)}}
& \F 
}
\end{equation}

However, its left adjoint ${\d{(-)}}\op$ is only oplax monoidal, but
not lax monoidal, so we cannot get a similar diagram for
${\d{(-)}}\op$. We want to find a substitute for this lifting, in
particular, we want a left adjoint for the lifted $\d{(-)}$:
\[
\xymatrix@C=4pc{
(\Comon(\F))\op \ar@/^0.8pc/[r]^-{\d{(-)}}  \ar@{}[r]|-{\top}
& \Mon(\F) \ar@/^0.8pc/[l]^-{{\sd{(-)}}\op}
}
\]
We obtain not a natural isomorphism between two functors $(\Comon(\F))\op \to \F$ as in diagram~(\ref{eq:adj-mon}), but
instead only a natural transformation $\iota :{ \d{(-)}}\op \cdot U \to U
\cdot {{\sd{(-)}}\op}$ between two functors $\Mon(\F) \to \F\op$.
\[
\xymatrix@R=1.5pc{
(\Comon(\F))\op \ar[d]_{U} 
& \Mon(\F) \ar[l]_-{{\sd{(-)}}\op} \ar[d]^{U} \xtwocell[ld]{}<>{^\iota} \\
\F\op \
& \F \ar[l]^-{{\d{(-)}}\op}
}
\]
We call the functor $\sd{(-)} : (\Mon(F))\op \to \Comon(F)$ the
\emph{Sweedler dual}.

\subsection{Sweedler dual for some constructions of monoids}

As we have seen in the setting of monad-comonad interaction laws, it
is not always easy to find the Sweedler dual. In the remainder of
this section, we focus on the cases of free monoids and free monoids
quotiented by ``equations'' for one method to compute them.

Let $F^*$ be the free monoid on $F$. In this case, if the cofree
comonoid on $\d{F}$ exists, then it is the Sweedler dual of $F^*$,
i.e., we can show that $\sd{(F^*)} = (\d{F})^\dagger$. This is seen
from the following calculation:
\[
\renewcommand{\arraystretch}{1.5}
\begin{array}{l}
(\Comon(\F))\op((\d{F})^\dagger, D)
    \cong \Comon(\F)(D, (\d{F})^\dagger) 
    \cong \F(U D, \d{F}) \\
\quad
    \cong \F\op(\d{F}, U D)
    \cong \F(F, \d{(U D)}) 
    \cong \F(F, U \d{D}) 
    \cong \Mon(\F)(F^*, \d{D})
\end{array} 
\]
This observation facilitates calculation of Sweedler duals of free
monads (i.e., theories without equations). A natural question is to
ask what happens in the presence of equations. Suppose that we
have a monoid $T$ given as a coequalizer 
\[
\xymatrix@C+=2cm{
  E^* \ar@<.5ex>[r]^-{f^L} \ar@<-.5ex>[r]_-{g^L} & F^* \ar[r] & T
}
\]
in $\Mon(\F)$ where $(-)^L$ is the left transpose of the
free/forgetful adjunction between $\F$ and $\Mon(\F)$. The maps
$f, g : E \to U F^*$ of $\F$ represent a system of equations in a set
of variables $E$, and we can think of $T$ as being the monoid obtained
by calculating the free monoid and then quotienting by the
equations. We can try to obtain the Sweedler dual of $T$ by
constructing a ``dual'' diagram as follows. We can instantiate $\iota$
at $F^*$ and obtain a map
$\iota_{F^*} : \d{(U {F^*})} \to U
(\sd{({F^*})})$
in $\F\op$, i.e., a map
$\iota_{F^*} : U ({\sd{({F^*})}}) \to {
  \d{(U {F^*})}}$
in $\F$. By composing with $\d{f}$ and $\d{g}$, we get:
\[
\xymatrix@C+=2cm{
  U ((\d{F})^\dagger) = U (\sd{(F^*)}) \ar[r]^-{\iota_{F^*}} & \d{(U F^*)} \ar@<.5ex>[r]^-{\d{f}} \ar@<-.5ex>[r]_-{\d{g}} & \d{E}
}
\]
The Sweedler dual $\sd{T}$ of $T$ is now obtained as an equalizer in
$\Comon(\F)$ by 
\[
\xymatrix@C+=2cm{
\sd{T} \ar[r] &
  (\d{F})^\dagger \ar@<.5ex>[r]^-{(\d{f} \comp \iota_{F^*})^R} \ar@<-.5ex>[r]_-{(\d{g} \circ \iota_{F^*})^R} & (\d{E})^\dagger
}
\]
where $(-)^R$ is the right transpose of the
forgetful/cofree adjunction between $\Comon(\F)$ and $\F$.

\begin{example} \label{ex:nelists-d} Revisiting
  Example~\ref{ex:nelists-c}, the nonempty list monad $T X = X^+$ arises
  as the quotient of the free monad $T_0 X = \mu Z.\, X + Z \times Z$
  by the associativity equation for its operation
  $c_X : X \times X \to T_0 X$, i.e., the equation
\[
\small
\xymatrix@R=0.5pc{
(X \times X) \times X \ar[dd]_{\ass} \ar[r]^-{c_X \times \eta_X}
& T_0 X \times T_0 X \ar[r]^-{c_{T_0X}} 
  & T_0 T_0 X \ar[dr]^{\mu_X} \\
& & & T_0 X \\
X \times (X \times X) \ar[r]^-{\eta_X \times c_X}
& T_0 X \times T_0 X \ar[r]^-{c_{T_0X}} 
  & T_0 T_0 X \ar[ur]_{\mu_X}    
}
\]
The monad $T_0$ is the free monad on the functor $F X = X \times X$. The dual of $F$ is $G Y = Y + Y$.
The Sweedler dual of $T$ is the subcomonad of the cofree
  comonad $\sd{T_0} Y = \nu W.\, Y \times (W + W)$ by the
  coassociativity coequation for its cooperation
  $c'_Y : \sd{T_0} Y \to Y + Y$,
 i.e., the coequation
\[
\small
\xymatrix@R=0.5pc{
& \sd{T_0} \sd{T_0} Y \ar[r]^-{c'_{\sd{T_0} Y}}
  & \sd{T_0} Y + \sd{T_0} Y \ar[r]^-{c'_Y + \eps_Y}
    & (Y + Y) + Y \ar[dd]^{\ass} \\
\sd{T_0} Y \ar[ur]^-{\de_Y} \ar[dr]_-{\de_Y} \\
& \sd{T_0} \sd{T_0} Y \ar[r]^-{c'_{\sd{T_0} X}}
  & \sd{T_0} Y + \sd{T_0} Y \ar[r]^-{\eps_Y + c'_Y}
    & Y + (Y + Y)
}
\] 
With some calculation, we can find that
$\sd{T} Y \cong Y \times (Y + Y)$.\footnote{This calculation was carried out in detail in \cite{Uus15}.} The comonad map $i : \sd{T} \to \sd{T_0}$ is defined by 
$i\, (y, \inl\, y') = (y, \inl\, (i\, (y', \inl\, y')))$,
$i\, (y, \inr\, y') = (y, \inr\, (i\, (y', \inr\, y')))$,

Compared to $\sd{T_0}$, the comonad $\sd{T}$ is relatively degenerate
because coassociativity entails corectangularity (while associativity
does not entail rectangularity\footnote{In band theory, left and
  right rectangularity are the equations $(x*y)*z = x*z$ and $x*(y*z)
  = x*z$.}), as the following theorem shows.
\begin{theorem}
  Given a comonad $(D, \eps,
\de)$ on $\C$ with a cooperation $c_Y : D Y \to Y + Y$. We show that
the coequation of coassociativity
\[
\small
\xymatrix@R=0.5pc{
& D D Y \ar[r]^-{c_{D Y}}
  & D Y + D Y \ar[r]^-{c_Y + \eps_Y}
    & (Y + Y) + Y \ar[dd]^{\ass} \\
D Y \ar[ur]^-{\de_Y} \ar[dr]_-{\de_Y} \\
& D D Y \ar[r]^-{c_{D Y}}
  & D Y + D Y \ar[r]^-{\eps_Y + c_Y}
    & Y + (Y + Y)
}
\]
implies left and right corectangularity, i.e. the two coequations
\[
\small
\xymatrix@R=1.5pc{
& D D Y \ar[r]^-{c_{D Y}}
  & D Y + D Y \ar[r]^-{c_Y + \eps_Y}
    & (Y + Y) + Y  \\
D Y \ar[ur]^{\de_Y} \ar[dr]_{\de_Y} \ar[rrr]^{c_Y} 
& 
  & 
    & Y + Y \ar[u]_-{\inl + \id} \ar[d]^{\id + \inr} \\
& D D Y \ar[r]^-{c_{D Y}}
  & D Y + D Y \ar[r]^-{\eps_Y + c_Y}
    & Y + (Y + Y)
}
\]
\end{theorem}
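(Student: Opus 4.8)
The plan is to avoid a direct pentagon chase and instead reduce both target coequations to a single disjointness argument in the extensive category. Write $w = c_{DY} \comp \de_Y : DY \to DY + DY$ for the common prefix appearing in the coassociativity coequation. First I would record one consequence of the comonad laws that does \emph{not} use coassociativity: by naturality of $c$ along $\eps_Y : DY \to Y$ we have $(\eps_Y + \eps_Y) \comp c_{DY} = c_Y \comp D\eps_Y$, and postcomposing with $\de_Y$ and using the counit law $D\eps_Y \comp \de_Y = \id$ yields $(\eps_Y + \eps_Y) \comp w = c_Y$. By stability of coproducts under pullback, this identity shows that $w$ and $c_Y$ induce the \emph{same} decomposition $DY \cong P + Q$: there are coprojections $i_P, i_Q$ and maps $u : P \to DY$, $v : Q \to DY$ with $w \comp i_P = \inl \comp u$ and $w \comp i_Q = \inr \comp v$, whence $c_Y \comp i_P = \inl \comp \eps_Y \comp u$ and $c_Y \comp i_Q = \inr \comp \eps_Y \comp v$.

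Next I would restrict the desired equations to $P$ and $Q$. On $Q$, left corectangularity holds outright: both composites reduce to $\inr \comp \eps_Y \comp v$ using only the decomposition above, with no appeal to coassociativity. Thus left corectangularity is equivalent to the single equation $c_Y \comp u = \inl \comp \eps_Y \comp u$ over $P$; symmetrically, right corectangularity reduces to $c_Y \comp v = \inr \comp \eps_Y \comp v$ over $Q$.

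The crux is to extract these from coassociativity, and here is where the one real argument lives. Restricting the coassociativity coequation to $P$ gives $\ass \comp \inl \comp c_Y \comp u = \inl \comp \eps_Y \comp u$ as maps $P \to Y + (Y + Y)$. Now decompose $c_Y \comp u$ as $P \cong P_L + P_R$, where $P_R$ is the summand on which $c_Y \comp u$ chooses $\inr$. On $P_R$ the left-hand side lands in the second summand of $Y + (Y + Y)$ (via $\ass \comp \inl \comp \inr = \inr \comp \inl$), whereas the right-hand side lands in the first summand (via $\inl$). Since the two sides are equal, the restriction of this common map to $P_R$ factors through both $\inl$ and $\inr$, hence through their pullback $0$; strictness of the initial object then forces $P_R \cong 0$. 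Consequently $c_Y \comp u = \inl \comp \eps_Y \comp u$, establishing left corectangularity, and the symmetric restriction to $Q$ (using $\ass \comp \inr = \inr \comp \inr$) establishes right corectangularity. I expect this disjointness step---recognizing that coassociativity forces the ``wrong-handed'' summand $P_R$ to be initial---to be the only genuine obstacle; everything else is the routine bookkeeping of coproduct decompositions already exploited in the earlier degeneracy proofs.
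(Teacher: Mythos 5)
Your proposal is correct in substance but follows a genuinely different route from the paper's own proof. Both arguments start identically: pull back $w = c_{DY}\comp\de_Y$ along the two coprojections to split $D Y$ as a coproduct $P + Q$ (the paper's $P Y$, $Q Y$ with maps $f_Y$, $g_Y$ are your $u$, $v$), and both use the naturality-plus-counit identity $(\eps_Y+\eps_Y)\comp w = c_Y$. The divergence is in how coassociativity is exploited. The paper never cancels a coprojection and never invokes disjointness or strictness in this proof: it rewrites the corectangularity right-hand side $(\id+\inr)\comp c_Y$ as $(\id+\inr)\comp(\eps_Y+\eps_Y)\comp w$, checks that this map and $(\eps_Y+c_Y)\comp w$ satisfy the same two copairing triangles over $P+Q$ (coassociativity restricted to $Q$ supplying the nontrivial triangle), and concludes by uniqueness of copairing. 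You instead reduce each corectangularity to a componentwise equation ($c_Y\comp u = \inl\comp\eps_Y\comp u$ on $P$, resp.\ $c_Y\comp v = \inr\comp\eps_Y\comp v$ on $Q$) and then run a second pullback decomposition together with disjointness of coproducts and strictness of $0$, in the style of the paper's earlier degeneracy theorems. Your route uses more of the extensivity package where the paper gets by with stability and copairing uniqueness alone; what it buys is uniformity with the degeneracy proofs and a very explicit localization of where coassociativity bites.

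One step needs completing. From $P_R\cong 0$ you conclude ``consequently $c_Y\comp u = \inl\comp\eps_Y\comp u$,'' but killing $P_R$ only gives that $c_Y\comp u$ factors through $\inl$, say $c_Y\comp u = \inl\comp a$; it does not by itself identify the factor $a$ with $\eps_Y\comp u$. To finish, feed the factorization back into your restricted coassociativity equation: $\inl\comp a = \ass\comp\inl\comp\inl\comp a = \ass\comp\inl\comp c_Y\comp u = \inl\comp\eps_Y\comp u$, and then cancel the coprojection $\inl$, which is monic because coproducts in an extensive category are disjoint. A parallel remark applies on the $Q$ side: there the two relevant composites both begin with $\inr$, so the ``symmetric'' disjointness step also needs either this monic cancellation or a preliminary application of the isomorphism $\ass^{-1}$ before disjointness of $\inl$ and $\inr$ can be invoked. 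These are routine completions using standard facts about extensive categories, so the argument stands; but as written the final inference is not purely a consequence of $P_R\cong 0$.
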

\begin{proof}
We can pull $c_{D Y} \comp \delta_Y$ back along the coproduct
coprojections (the existence of these pullbacks is part of
extensivity):
\[
\small
\xymatrix@R=1.5pc{
P Y \ar[rr]^-{f_Y} \ar[d]_{i_Y}
& & D Y \ar[d]^{\inl} \\
D Y \ar[r]^-{\de_Y}
& D D Y \ar[r]^-{c_{DY}}
  & D Y + D Y \\
Q Y \ar[rr]^-{g_Y} \ar[u]^{j_Y}
& & D Y \ar[u]_{\inr}
}
\]
By stability of coproducts under pullback (which is also part of
extensivity), $(D Y, i_Y, j_Y)$ is a coproduct $P Y$ and $Q Y$.

For right corectangularity, we notice that the two maps 
$(\eps_Y + c_Y) \comp c_{D Y} \comp \delta_Y$ and
$(Y + \inr) \comp (\eps_Y + \eps_Y) \comp c_{D Y} \comp \delta_Y$ 
both satisfy both triangles of the unique copair of
$\inl \comp \eps_Y \comp f_Y$ and $\inr \comp c_Y \comp g_Y$, so they
must be the same map. Indeed, we have both
\[
\small
\xymatrix@R=1.5pc{
P Y \ar[rr]^-{f_Y} \ar[d]_{i_Y}
& & D Y \ar[r]^-{\eps_Y} \ar[d]^{\inl}  
    & Y \ar[d]^{\inl} \\
D Y \ar[r]^-{\de_Y}
& D D Y \ar[r]^-{c_{DY}}
  & D Y + D Y \ar[r]^-{\eps_Y + c_Y}
    & Y + (Y + Y) \\
Q Y \ar[rr]^-{g_Y} \ar[u]^{j_Y}
& & D Y \ar[u]_{\inr} \ar[r]^-{c_Y}
    & Y + Y \ar[u]_{\inr}
}
\]
and, using coassociativity, also 
\[
\small
\xymatrix@R=1.2pc{
P Y \ar[rr]^-{f_Y} \ar[dd]_{i_Y}
& & D Y \ar[r]^-{\eps_Y} \ar[dd]^{\inl}  
    & Y \ar@{=}[r]  \ar[dd]_-{\inl}
      & Y  \ar[dd]^{\inl} \\
\\
D Y \ar[r]^-{\de_Y}
& D D Y \ar[r]^-{c_{DY}}
  & D Y + D Y \ar[r]^-{\eps_Y + \eps_Y}
    & Y + Y \ar[r]^-{\id + \inr}
      & Y + (Y + Y) \ar@{=}[dd]\\
Q Y \ar[u]^{j_Y} \ar[rr]^-{g_Y} 
& & D Y \ar[u]_{\inr} \ar[d]^{\inr} \ar[r]^{\eps_Y} 
    & Y \ar[u]_{\inr} \ar[d]^{\inr} \\
D Y \ar[r]^-{\de_Y} \ar@{=}[d]
& D D Y \ar[r]^-{c_{DY}}
  & D Y + D Y \ar[r]^-{c_Y + \eps_Y}
    & (Y + Y) + Y \ar[r]^-{\alpha}
      & Y + (Y + Y) \ar@{=}[d] \\
D Y \ar[r]^-{\de_Y} 
& D D Y \ar[r]^-{c_{DY}}
  & D Y + D Y \ar[rr]^-{\eps_Y + c_Y}
    & & Y + (Y + Y) \\
Q Y \ar@/^2pc/@{=}[uuu] \ar[u]_{j_Y} \ar[rr]^-{g_Y}
 & & D Y \ar[u]_{\inr} \ar[rr]^-{c_Y}
    & & Y + Y \ar[u]_{\inr}
}
\]

The result now follows from noticing that
$(\eps_Y + \eps_Y) \comp c_{D Y} \comp \delta_Y = c_{Ð Y}$:
\[
\small
\xymatrix@R=1.5pc{
D Y \ar@{=}[dr] \ar[r]^{\de_Y} 
& D D Y \ar[d]^{D \eps_Y} \ar[r]^-{c_{D Y}}
  & D Y + D Y \ar[d]^{\eps_Y + \eps_Y} \\
& D Y \ar[r]^-{c_Y}
  & Y + Y
}
\]

Left corectangularity is proved analogously. 
\qed\end{proof}
\end{example}

\begin{example} \label{ex:update-d}
  Going back to Example~\ref{ex:update-c}, 
  the update monad $T X = A \fun (B \times X)$ with $B = (B, \o, \pl)$ a
monoid and $(A, \dn)$ a $B$-set arises as the quotient of the monad
$T_0 X = \mu Z.\, X + (A \fun Z) + (B \times Z$) by the following three
equations for its operations $c_X : (A \fun X) \to T_0 X$ and
$d_X : B \times X \to T_0 X$:
\[
\small
\xymatrix@R=1.5pc{
X \ar[d] \ar[rrrr]^-{\eta_X}
& & & & T_0 X \ar@{=}[d] \\
A \fun (1 \times X) \ar[r]^-{\id \fun (\o \times \id)}
& A \fun (B \times X) \ar[r]^-{\id \fun c_X}
  & A \fun T_0 X \ar[r]^-{d_{T_0X}}
     & T_0 T_0 X \ar[r]^-{\mu_X}
       & T_0 X
}
\]
\[
\small
\xymatrix@R=1.5pc{
B \times (B \times X) \ar[d] \ar[r]^-{\id \times d_X}
& B \times T_0 X \ar[r]^-{d_{T_0 X}}
  & T_0 T_0 X \ar[r]^-{\mu_X}
    & T_0 X \ar@{=}[d] \\
(B \times B) \times X \ar[r]^-{{\oplus} \times \id}
& B \times X \ar[rr]^-{d_X}
  & & T_0 X
}
\]
\[
\small
\xymatrix@R=1.5pc{
A \fun (B \times (A \fun X)) \ar[d]^{\id \fun (\id \times ({\dn} \fun \id))} \ar[r]^-{\id \fun (\id \times c_X)}
& A \fun (B \times T_0 X) \ar[r]^-{\id \fun d_{T_0X}}
  & A \fun T_0 T_0 X \ar[r]^-{c_{T_0T_0X}}
    & T_0 T_0 T_0 X \ar[r]^-{\mu^{(3)}_X}
      & T_0 X \ar@{=}[d] \\
A \fun (B \times ((A \times B) \fun X)) \ar[r]
& A \fun (B \times X) \ar[r]^-{\id \fun d_X}
  & A \fun T_0 X \ar[r]^-{c_{T_0X}}
    & T_0 T_0 X \ar[r]^-{\mu_X}
      & T_0 X 
}
\]
The monad $T_0 X$ is the free monad on the functor $F X = (A \fun X) +
B \times X$. The dual of $F$ is $G Y = (A \times Y) \times (B \fun
Y)$.  The Sweedler dual of the monad $T$ is the subcomonad of the
cofree comonad $\sd{T_0} Y = \nu W.\, Y \times (A \times W) \times (B
\fun W)$ on the functor $G$ resulting from imposing the following
coequations on its cooperations $c'_Y : \sd{T_0} Y \to A \times Y$ and
$d'_Y : \sd{T_0} Y \to B \fun Y$:
\[
\small
\xymatrix@R=1.5pc{
\sd{T_0} Y \ar@{=}[d] \ar[rrrr]^-{\eps_Y}
& & & & Y \\
\sd{T_0} Y \ar[r]^{\de_Y}
& \sd{T_0} \sd{T_0} Y \ar[r]^-{d'_{\sd{T_0}Y}}
  & A \times \sd{T_0} Y \ar[r]^-{\id \times c'_Y}
    & A \times (B \fun Y) \ar[r]^-{\id \times (\o \fun \id)}
      & A \times (1 \fun Y) \ar[u]
}
\]
\[
\small
\xymatrix@R=1.5pc{
\sd{T_0} Y \ar@{=}[d] \ar[r]^-{\de_X} 
& \sd{T_0} \sd{T_0} Y \ar[r]^-{c'_{ÐY}}
  & B \fun \sd{T_0} Y \ar[r]^-{\id \fun c'_Y}
    & B \fun (B \fun Y)  \\
\sd{T_0} Y \ar[rr]^-{c'_Y}
& & B \fun Y \ar[r]^-{{\pl} \fun \id}
    & (B \times B) \fun Y \ar[u]
}
\]
\[
\small
\xymatrix@R=1.5pc{
\sd{T_0} Y \ar@{=}[d] \ar[r]^-{\de^{(3)}_Y}
& \sd{T_0} \sd{T_0} \sd{T_0} Y \ar[r]^-{c'_{\sd{T_0}\sd{T_0}Y}}
  & A \times \sd{T_0} \sd{T_0} Y \ar[r]^-{\id \times d'_{\sd{T_0}Y}}
     & A \times (B \fun \sd{T_0} Y) \ar[r]^-{\id \times (\id \fun c'_Y)} 
       & A \times (B \fun (A \times Y)) \\
\sd{T_0} Y \ar[r]^{\de_Y}
& \sd{T_0} \sd{T_0} Y \ar[r]^-{c'_{ÐY}}
  & A \times \sd{T_0} Y \ar[r]^-{\id \times d'_Y}
    & A \times (B \fun Y) \ar[r]
      & A \times (B \fun ((A \times B) \times Y)) \ar[u]_{\id \times (\id \fun ({\dn} \times \id))}
}
\]
Calculating, we can find that $\sd{T} Y \cong A \times (B \fun Y)$.\footnote{Also this calculation appeared in \cite{Uus15}.}
The comonad map $i : \sd{T} \to \sd{T_0}$ is defined by 
$i\, (a, f) = (f\, \o, (a, i\, (a, f)), \lambda b.\, i\, (a \dn b, \lambda b'.\, 
f\, (b \pl b)))$.

\end{example}

\section{Related work}
\label{sec:related}

Works closest related to this paper on monad-comonad interaction laws
are Power and Shkaravska's work on arrays (lenses) as
comodels~\cite{PS04}, Power and Plotkin's study of tensors of models
and comodels~\cite{PP08}, Abou-Saleh and Pattinson's work on comodels
for operational semantics~\cite{ASP13}, M{\o}gelberg and Staton's work
on linear usage of state \cite{MS14} and Uustalu's work on
runners~\cite{Uus15}---the starting point for this work. Pattinson and
Schr\"oder~\cite{PS15} studied equational reasoning about comodels and
noted the degeneracy from nullary and binary cocommutative
cooperations; see also Bauer's tutorial \cite{Bau18}. Runners share
some features with Plotkin and Pretnar's algebraic effect
handlers~\cite{PP13}, we describe them in the end of this section.  In
their new work \cite{AB19}, Ahman and Bauer proposed a language
design for (residual) runners.

Hancock and Hyvernat's work on interaction structures \cite{HH06}
centers on the canonical interaction law of the free monad on $\d{F}$
and the cofree comonad on $F$ where $F$ is a container
functor. (Intuitionistic) linear-logic based two-party session typing
\cite{TCP11} is very much about canonical interaction between
syntactically dual functors, as we discuss in the end of this
section. The same idea is central in game-theoretic semantics of
(intuitionistic) linear logic (formulae-as-games,
proofs-as-strategies) \cite{AJ94}.
\tu{what is a good intuitionistic lin logic games semantics reference to put here, do we need one at all?}

\paragraph{Runners vs.\ handlers}

There are some similarities between handlers (now often called deep
handlers) to runners, but also significant differences.

Given a monad $T = (T, \eta, \mu)$ on $\C$. We are interested in
handling or running computations specified by $T$.

A \emph{handler} \cite{PP13} for an object (value set for input
computations) $X$ is mathematically an algebra of the monad $T$ on
$X$, i.e., an object $Z$ (return value set) with a map
$\alpha : T Z \to Z$ satisfying the conditions of a monad algebra,
that also comes with a map $f : X \to Z$.

A handler induces a unique map $h : T X \to Z$ satisfying 
\begin{equation} \label{eq:handler}
\small
\xymatrix@R=1.5pc@C=4pc{
X \ar[d]_{\eta_X} \ar[dr]^f \\
T X \ar@{.>}[r]^{h}
& Z \\
T T X \ar[u]^{\mu_X} \ar@{.>}[r]^{Th}
& T Z \ar[u]_{\alpha}
}
\end{equation}
as $((T X, \mu_X), \eta_X)$ is the free algebra of $T$ on $X$.

A runner, as we know, can be taken to be an object $Y$ (state set)
with a monad map from $T$ to the state monad
$\St^Y = (\St^Y, \eta^Y, \mu^Y)$, i.e., a natural transformation
$\vartheta : T \to \St^Y$ such that, for any $X$, we have
\[
\small
\xymatrix@R=1pc@C=4pc{
X \ar[dd]_{\eta_X} \ar[ddr]^{\eta^Y_X} \\ 
\\
T X \ar@{.>}[r]^{\vartheta_X}
& \St^Y X \\
& \St^Y \St^Y X \ar[u]_{\mu^Y_X} \\
T T X \ar[uu]^{\mu_X} \ar@{.>}[r]^{T\vartheta_X}
& T \St^Y X \ar[u]_{\vartheta_{\St^Y X}}
}
\]

We know that that such natural transformations $\vartheta$ are in a
bijection with coalgebras of the comonad $\sd{T}$ with carrier $Y$,
i.e., maps $\gamma : Y \to \sd{T} Y$ satisfying the conditions of a
comonad coalgebra.

We can see that a handler induces a map $h$ with domain $TX$ where $X$
an arbitrary fixed object; the codomain of $h$ can be anything---$Z$ is
an arbitrary fixed object. A runner, at the same time, is a family of
maps $\vartheta_X$ with domain $TX$ where $X$ can be varied to be any
object. The codomain of $\vartheta_X$ is of a prescribed form---it has
to be $\St^Y X$ where $Y$ is an arbitrary fixed object. The map $h$
is induced by an algebra of the monad $T$ while the family of maps
$\vartheta_X$ is induced by (and also induces) a coalgebra of the
comonad $\sd{T}$.

We can make this comparison fairer by acknowledging that algebras
$(Z, \alpha)$ of the monad $T$ are in a bijection with monad maps from
$T$ to the continuations monad $(\Cont^Z, \eta^Z, \mu^Z)$ for
the answer set $Z$, 
defined by $\Cont^Z X = (X \fun Z) \fun Z$.  Then
for any $X$ and $f : X \to Z$, the map $h$ from diagram
(\ref{eq:handler}) factorizes as
\[
\small
\xymatrix@R=1pc@C=4pc{
X \ar[dd]_{\eta_X} \ar[ddr]^{\eta^Z_X} \\
\\
T X \ar@{.>}[r]^{\xi_X}
& \Cont^Z X \ar[r]^{\lambda k.\, k\, f}  
  & Z \\
& \Cont^Z \Cont^Z X \ar[u]_{\mu^Z_X} \\
T T X \ar[uu]^{\mu_X} \ar@{.>}[r]^{T\xi_X}
& T \Cont^Z X \ar[u]_{\xi_{\Cont^Z X}}
}
\]
where $\xi : T \to \Cont^Z$ is the monad map corresponding to the
algebra structure $\alpha : T Z \to Z$.

Now both the handler-induced function $\xi$ and the runner $\vartheta$
are functions with domain $TX$ polymorphic in $X$. Still the
handler-induced function $\xi$ is specified by an algebra of the monad
$T$ while the runner $\vartheta$ is specified by a coalgebra of the
comonad $\sd{T}$.

Conceptually, handlers and runners/interaction laws are really
different in that, in the case of handlers, effects are treated inside
a computation while runners/interaction laws use an outside machine to
do this.

\paragraph{Session types}

In session type systems, one usually works with an inductively defined
set of types along the lines of 
\[
\renewcommand{\arraystretch}{1}
\begin{array}{rcl@{\quad}l}
G & := & Y & \textrm{return} \\
& \mid & G_0 + G_1 & \textrm{internal choice} \\ 
& \mid & G_0 \times G_1 & \textrm{external choice} \\ 
& \mid & A \times G_0 & \textrm{output} \\
& \mid & A \fun G_0 & \textrm{input}
\end{array}
\]
where $Y$ is a type variable and $A$ is a base type. (For simplicity,
we ignore inductive and coinductive types here.)  Internal choice and
external choice are really just special cases of output resp.\ input
for $A = \Bool$.

The dual of a type is defined recursively by
\[
\renewcommand{\arraystretch}{1}
\begin{array}{rcl}
\d{Y} & = & Y \\
\d{(G_0 + G_1)} & = & \d{G_0} \times \d{G_1} \\
\d{(G_0 \times G_1)} & = & \d{G_0} + \d{G_1} \\
\d{(A \times G_0)} & = & A \fun \d{G_0} \\
\d{(A \fun G_0)} & = & A \times \d{G_0}
\end{array}
\]

This syntactically defined dual agrees with our semantic concept of
the dual of a functor, except for the last clause where a discrepancy
arises. We work in an arbitrary Cartesian closed category.  In session
typing a linear setting is intended.

\section{Conclusion and future work}
\label{sec:conclusion}

We hope to have demonstrated that monad-comonad interaction laws are a
natural concept for describing interaction of effectful computations
with machines providing the effects. They are well-motivated not only
as a computational model, but also mathematically, admitting an
elegant theory based on concepts and methods that have previously
proved useful in other mathematical contexts, such as the Sweedler
dual.

There are many questions that we have not yet answered. What
are some general ways to compute the Sweedler dual? Power's work
\cite{PS04} suggests a sophisticated iterative construction based on
improving approximations. What is a good general syntax for
cooperations and coequations?  What can be said about the ``dual''
and the Sweedler ``dual'' in the presence of a residual monad and how
to compute them in this situation? How to compute the
Sweedler dual in some intuitionistic linear setting adequate for session typing?

\er{
Questions we might try to answer:
\begin{itemize}
\item What's the relation to handlers?
\item What's the relation to comodels?
\item What's the relation to distributive laws between monads and comonads?
\item What's the relation to Frobenius monoids?
\end{itemize}
}

\paragraph{Acknowledgements} 

We are grateful to Robin Cockett for discussions and encouragement
and to Ignacio L\'{o}pez Franco for pointing out the categorical work on
measuring morphisms.

T.U.\ was supported by the Icelandic Research Fund project grant
no.~196323-051, the Estonian Ministry of Education and Research
institutional research grant no.~IUT33-13, a project of the
Estonian-French Parrot cooperation programme and a guest professorship
from Universit\'e Paris 13. E.R.\ was in part supported by the
European Research Council starting grant no. 715753 (SECOMP) and by
Nomadic Labs via a grant on ``Evolution, Semantics, and Engineering of
the $F^*$ Verification System''.  E.R.\ also benefited from the
above-mentioned Parrot and Icelandic Research Fund projects.

\tu{any other acks?}\er{added my support + benefiting from Parrot and Icelandic Research Fund, is this last okay?}

%
%
%

\nocite{*}  \tu{remove the nocite command before submitting}

\bibliographystyle{splncs04}
\bibliography{preprint}
%

\end{document}